\newcommand{\R}{\mathbb{R}}
\newcommand{\C}{\mathbb{C}}
\newcommand{\E}{\mathbb{E}}
\newcommand{\ket}[1]{| #1 \rangle}
\newcommand{\ip}[2]{\langle #1|#2 \rangle}
\newcommand{\proj}[1]{| #1 \rangle \langle #1 |}
\newcommand{\bracket}[3]{\langle #1|#2|#3 \rangle}
\newcommand{\sm}[1]{\left( \begin{smallmatrix} #1 \end{smallmatrix} \right)}
\DeclareMathOperator{\poly}{poly}
\DeclareMathOperator{\tr}{tr}
\DeclareMathOperator{\Var}{Var}
\DeclareMathOperator{\Inf}{Inf}
\newcommand{\be}{\begin{equation}}
\newcommand{\ee}{\end{equation}}
\newcommand{\bea}{\begin{eqnarray}}
\newcommand{\eea}{\end{eqnarray}}
\newcommand{\bes}{\begin{equation*}}
\newcommand{\ees}{\end{equation*}}
\newcommand{\beas}{\begin{eqnarray*}}
\newcommand{\eeas}{\end{eqnarray*}}
\newtheorem{thm}{Theorem}
\newtheorem*{thm*}{Theorem}
\newtheorem{cor}[thm]{Corollary}
\newtheorem{lem}[thm]{Lemma}
\newtheorem*{lem*}{Lemma}
\begin{document}

\title{Some applications of hypercontractive inequalities in quantum information theory}

\author{Ashley Montanaro\footnote{Centre for Quantum Information and Quantum Foundations, Department of Applied Mathematics and Theoretical Physics, University of Cambridge, UK; {\tt am994@cam.ac.uk}.}}

\maketitle

\begin{abstract}
\noindent Hypercontractive inequalities have become important tools in theoretical computer science and have recently found applications in quantum computation. In this note we discuss how hypercontractive inequalities, in various settings, can be used to obtain (fairly) concise proofs of several results in quantum information theory: a recent lower bound of Lancien and Winter on the bias achievable by local measurements which are 4-designs; spectral concentration bounds for $k$-local Hamiltonians; and a recent result of Pellegrino and Seoane-Sep{\'u}lveda giving general lower bounds on the classical bias obtainable in multiplayer XOR games.
\end{abstract}


\section{Introduction}
\label{sec:intro}

Hypercontractive inequalities, which have been used in theoretical physics for many years, have more recently become increasingly important tools in theoretical computer science. The prototypical example of such an inequality is hypercontractivity of a certain ``noise'' operator on the boolean cube. For a function $f:\{0,1\}^n \rightarrow \R$, and $-1 \le \epsilon \le 1$, define the noise operator $T_\epsilon$ as follows:
\be \label{eq:noiseop} (T_\epsilon f)(x) = \E_{y \sim_\epsilon x}[f(y)], \ee
where the expectation is over bit-strings $y$ obtained from $x$ by flipping each bit of $x$ with probability $(1-\epsilon)/2$. Thus, if $\epsilon=1$, $T_\epsilon f = f$, whereas if $\epsilon=0$, $T_\epsilon f$ is constant. If $f$ has Fourier expansion $f = \sum_{S \subseteq [n]} \hat{f}(S) \chi_S$, where $\chi_S(x) = (-1)^{\sum_{i \in S} x_i}$, then
\[ T_\epsilon f = \sum_{S \subseteq [n]} \epsilon^{|S|} \hat{f}(S) \chi_S. \]
We thus see that $T_\epsilon$ suppresses the higher order Fourier coefficients of $f$. Let $\|f\|_p$ be the normalised $\ell_p$ norm, $\|f\|_p = \left(\frac{1}{2^n} \sum_{x \in \{0,1\}^n} |f(x)|^p \right)^{1/p}$. It is fairly straightforward to show that, for any $p \ge 1$, $\|T_\epsilon f\|_p \le \|f\|_p$, i.e.\ $T_\epsilon$ is a contraction. However, a stronger result also holds.

\begin{thm}[Bonami~\cite{bonami70}, Gross~\cite{gross75a}]
\label{thm:bonamibeckner}
For any $f:\{0,1\}^n \rightarrow \R$, and any $p$ and $q$ such that $1 \le p \le q \le \infty$ and $\epsilon \le \sqrt{\frac{p-1}{q-1}}$,
\[ \|T_\epsilon f\|_q \le \|f\|_p. \]
\end{thm}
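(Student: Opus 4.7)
The plan is to prove the theorem in two stages: first reduce the $n$-dimensional statement to the one-bit case, then establish the resulting two-point inequality by direct analysis.

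\emph{Tensorization.} Observe that the noise operator factors as $T_\epsilon = T_\epsilon^{(1)} \otimes \cdots \otimes T_\epsilon^{(n)}$, where $T_\epsilon^{(i)}$ acts only on the $i$th coordinate. Assuming the bound holds for $n=1$, one applies it one coordinate at a time. The key technical tool is Minkowski's integral inequality: for $q \ge p$, if $g(x,y)$ is any function of two sets of variables, then $\|\,\|g(\cdot,y)\|_{L^q_x}\,\|_{L^p_y} \ge \|\,\|g(x,\cdot)\|_{L^p_y}\,\|_{L^q_x}$. Applying the $n=1$ inequality in coordinate $n$ (treating the other bits as parameters) gives an $L^q_{x_n}L^p_{x_1,\dots,x_{n-1}}$ bound; Minkowski converts this to $L^p_{x_1,\dots,x_{n-1}}L^q_{x_n}$, and one iterates. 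This step is generic: any two-point contractive hypothesis tensorizes to a product-space statement under $p \le q$.

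\emph{Reduction to a one-variable inequality.} For $n=1$, any $f:\{0,1\}\to\R$ can be written as $f(x) = a + b(-1)^x$, so that $(T_\epsilon f)(x) = a + \epsilon b(-1)^x$. The theorem reduces to
\[
\left(\frac{|a+\epsilon b|^q + |a-\epsilon b|^q}{2}\right)^{1/q} \le \left(\frac{|a+b|^p + |a-b|^p}{2}\right)^{1/p}
\]
for $\epsilon \le \sqrt{(p-1)/(q-1)}$. By homogeneity we may take $a=1$, by symmetry $b \ge 0$, and by monotonicity it suffices to handle the extremal $\epsilon = \sqrt{(p-1)/(q-1)}$.

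\emph{The hard part.} Everything nontrivial is in the two-point inequality. The cleanest elementary route is a Taylor-series comparison: expand both sides in $b$ around $b=0$ using the generalized binomial theorem for $(1\pm b)^r$, raise to the $1/p$ power via the series for $(1+u)^{q/p}$, and verify the resulting coefficient-wise inequalities. The order-$b^2$ term is precisely where the hypothesis $\epsilon^2 \le (p-1)/(q-1)$ is tight; higher-order terms can then be controlled by induction. An alternative, more conceptual route (due to Gross) is to differentiate in $q$ along the curve $\epsilon(q)=\sqrt{(p-1)/(q-1)}$ and show that the resulting infinitesimal statement is equivalent to the log-Sobolev inequality $\mathrm{Ent}(f^2) \le 2\,\mathcal{E}(f,f)$ on the two-point space, which one checks by hand; integrating then recovers hypercontractivity. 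Either way, this two-point estimate is the entire analytic content of the theorem; the tensorization step above turns it into the full $n$-bit result with essentially no further work.
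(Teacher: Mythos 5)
A preliminary remark: the paper does not prove Theorem \ref{thm:bonamibeckner} at all --- it is imported from Bonami and Gross and used only through Corollary \ref{cor:hyper} --- so there is no in-paper proof to compare with; I am judging your sketch against the standard proofs in the literature.

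Your overall architecture is the standard one and is sound: $T_\epsilon$ factors across coordinates, a two-point inequality tensorizes to the $n$-bit statement by induction using Minkowski's integral inequality exactly as you describe, and the $n=1$ case reduces, after the normalisations $a=1$, $b \ge 0$, $\epsilon = \sqrt{(p-1)/(q-1)}$ (negative $\epsilon$ being handled by composing with the measure-preserving bit flip), to the stated two-point inequality. The gap is in the step you yourself label the hard part. A coefficient-wise comparison of power series does give a clean proof when one of the exponents equals $2$ --- e.g.\ the $(2,q)$ case for $q$ an even integer is a binomial-coefficient comparison, and the $(p,2)$ case then follows by duality since $T_\epsilon$ is self-adjoint and $(p-1)(p'-1)=1$ --- but for general $1 \le p \le q$ the expansions of $|1 \pm b|^q$ and of the outer power $u \mapsto u^{q/p}$ have coefficients of both signs for non-integer exponents, the desired inequality is not term-by-term true in any evident sense, and the series do not even converge when $|b| \ge 1$, a case you must also cover. ``Higher-order terms can then be controlled by induction'' is therefore an assertion rather than an argument; as stated, this route does not go through. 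The honest completions are exactly the alternatives you gesture at: either prove only the $(p,2)$ and $(2,q)$ endpoint cases and deduce the general case by composing through $L^2$ via the semigroup law $T_{\sqrt{(p-1)/(q-1)}} = T_{1/\sqrt{q-1}}\, T_{\sqrt{p-1}}$, so that $\|T_\epsilon f\|_q \le \|T_{\sqrt{p-1}} f\|_2 \le \|f\|_p$ (noting that even the $(2,q)$ case for non-integer $q$ requires a calculus argument rather than a series comparison), or carry out Gross's log-Sobolev route in full: differentiate $\|T_{e^{-t}} f\|_{q(t)}$ along $q(t) = 1 + (p-1)e^{2t}$, verify the two-point log-Sobolev inequality by hand, and integrate, then tensorize as before. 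With the hard step done by one of these standard arguments, your outline becomes a complete proof.
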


Although in general $\|f\|_q \ge \|f\|_p$ for $q \ge p$, we see that for small enough $\epsilon$, $\|T_\epsilon f\|_q \le \|f\|_p$. $T_\epsilon$ is thus said to be {\em hypercontractive}. This precisely accords with the intuition that applying noise to $f$ should make it more smooth; as $p$ increases, $\|f\|_p$ gives more weight to extreme values of $f$, but applying noise smoothes out these extreme values.

Theorem \ref{thm:bonamibeckner} was proven independently by Bonami~\cite{bonami70} and Gross~\cite{gross75a}. In the computer science literature, Theorem \ref{thm:bonamibeckner} is often known as the Bonami-Beckner inequality (Beckner proved generalisations of this result to complex-valued functions and other settings~\cite{beckner75}). The concept of hypercontractivity had its genesis in  important work by Nelson~\cite{nelson66} on quantum field theory (the term ``hypercontractive'' was coined shortly afterwards~\cite{simon72}), and has since found many applications in this and other areas of physics. For detailed reviews from a physics perspective of the history and more recent developments, see~\cite{davies92,gross06}.

Theorem \ref{thm:bonamibeckner} has subsequently also found a number of applications in computer science, the first of which was the celebrated result of Kahn, Kalai and Linial that every boolean function has an influential variable~\cite{kahn88}. Following this, many important results in the theory of boolean functions have also made crucial use of hypercontractivity of $T_\epsilon$ (e.g.~\cite{bourgain02,khot07,mossel10}; see~\cite{dewolf08,odonnell08} for reviews). These results often use Theorem \ref{thm:bonamibeckner} through the following corollary, which relates different norms of low-degree polynomials. For the (simple) proof of this corollary from Theorem \ref{thm:bonamibeckner}, see e.g.~\cite{odonnell07}.

\begin{cor}
\label{cor:hyper}
Let $f:\{\pm1\}^n \rightarrow \R$ be a degree $d$ polynomial. Then, for any $1 \le p \le 2$, $\|f\|_p \ge (p-1)^{d/2} \|f\|_2$, and for any $q \ge 2$, $\|f\|_q \le (q-1)^{d/2} \|f\|_2$.
\end{cor}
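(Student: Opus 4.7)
The plan is to prove the upper bound for $q \ge 2$ directly from Theorem~\ref{thm:bonamibeckner} by an ``inverse noise'' trick, and then derive the lower bound for $1 \le p \le 2$ by a short Hölder duality argument.

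For the first step, I would fix $q \ge 2$, set $\epsilon = 1/\sqrt{q-1}$, and observe that any degree-$d$ polynomial $f$ lies in the image of $T_\epsilon$: define $g$ by $\hat{g}(S) = \epsilon^{-|S|} \hat{f}(S)$, so that $T_\epsilon g = f$ and $g$ is again a degree-$d$ polynomial. Applying Theorem~\ref{thm:bonamibeckner} with $p = 2$ (which is the allowable choice for this $\epsilon$) then gives $\|f\|_q = \|T_\epsilon g\|_q \le \|g\|_2$. Parseval bounds $\|g\|_2^2 = \sum_{|S|\le d} \epsilon^{-2|S|} \hat{f}(S)^2 \le \epsilon^{-2d} \|f\|_2^2$, and substituting $\epsilon^{-1} = \sqrt{q-1}$ yields $\|f\|_q \le (q-1)^{d/2}\|f\|_2$.

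For the second step, let $1 \le p \le 2$ and let $r$ denote the conjugate exponent, $1/p + 1/r = 1$, so $r \ge 2$. Hölder's inequality with respect to the uniform measure on $\{\pm1\}^n$ gives $\|f\|_2^2 = \E[f \cdot f] \le \|f\|_p \|f\|_r$. Applying the upper bound from the first step to control $\|f\|_r$, and using the identity $r - 1 = 1/(p-1)$, rearranges into $\|f\|_p \ge (p-1)^{d/2}\|f\|_2$.

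The only mild conceptual point is recognising the ``inflation'' trick, namely that a degree-$d$ polynomial can be written as $T_\epsilon$ applied to another degree-$d$ polynomial whose $2$-norm is controlled by $\epsilon^{-d}\|f\|_2$; this is what allows one to effectively invert $T_\epsilon$ on the degree-$d$ subspace. Everything else is bookkeeping with Parseval and Hölder, so I do not expect a real obstacle.
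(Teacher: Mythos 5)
Your proof is correct and is essentially the argument the paper has in mind: it defers to O'Donnell for this corollary but spells out exactly your ``inflation'' trick (write $f = T_{1/\sqrt{q-1}}\,g$ with $\hat{g}(S) = (q-1)^{|S|/2}\hat{f}(S)$, then apply Theorem~\ref{thm:bonamibeckner} with $p=2$ and Parseval) when proving the spherical analogue, Corollary~\ref{cor:sphyper}. Your H\"older-duality step for $1 \le p \le 2$ is likewise the standard derivation of that half of the statement, so there is nothing genuinely different here.
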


Theorem \ref{thm:bonamibeckner} is just one example of a hypercontractive inequality; a more general setting in which to study hypercontractivity is as follows. Let $(S,\mu)$ be a measure space such that $\mu(S) = 1$. For any function $f:S \rightarrow \R$, define the $L^p$ norm of $f$ as $\|f\|_p = \left(\int |f(x)|^p d\mu(x)\right)^{1/p}$, for any $p\ge 1$. Note that these norms are nondecreasing with $p$. Let $L^p(S,\mu)$ be the space of all functions $f:S \rightarrow \R$ such that $\|f\|_p$ is finite. For linear operators $M:L^p(S,\mu) \rightarrow L^q(T,\nu)$, define the operator norm of $M$ as
\[ \|M\| := \sup_{f \neq 0} \frac{\|Mf\|_q}{\|f\|_p}. \]
$M$ is said to be a contraction from $L^p$ to $L^q$ when $\|M\| \le 1$. If $q > p$ and $\|M\| \le 1$, $M$ is said to be hypercontractive. For more on general hypercontractive inequalities, see for example~\cite{davies92,beckner75,beckner92,janson97}.


\subsection{Hypercontractivity in quantum information}

Applications of hypercontractivity are now being found in quantum information theory. In particular, Gavinsky et al.\ used Theorem \ref{thm:bonamibeckner} (via an inequality of Kahn, Kalai and Linial~\cite{kahn88}) to give the first exponential separation between one-way quantum and classical communication complexity of partial boolean functions~\cite{gavinsky07}. Later, Klartag and Regev used hypercontractivity on the $n$-sphere to resolve the long-standing conjecture that one-way quantum communication is exponentially stronger than even two-way classical communication~\cite{klartag11}. Ben-Aroya, Regev and de Wolf generalised Theorem \ref{thm:bonamibeckner} to matrix-valued functions~\cite{benaroya08}, and used their generalised inequality to prove limitations on quantum random access codes. Finally, the original version of Theorem \ref{thm:bonamibeckner} has been used by Buhrman et al.\ to find an upper bound on the classical success probability of a certain non-local game~\cite{buhrman11}, and by Ambainis and de Wolf to give a general lower bound on quantum query complexity~\cite{ambainis12}.

The purpose of the present work is to give several further examples of how hypercontractive inequalities can be used, from a computer science perspective, as tools in quantum information theory. The results themselves are largely not new, and the proofs given here sometimes lead to worse constants than the original proofs. However, the use of hypercontractivity makes the proofs (arguably) more concise and intuitive, and in some cases allows immediate generalisations. Notably, each result we discuss uses hypercontractivity in a different setting (the real $n$-sphere, the space of Hermitian operators on $n$ qubits, and the boolean cube).


\subsubsection{Bias achievable by local measurements}

Given a quantum state which is promised to be either $\rho$ (with probability $p$) or $\sigma$ (with probability $1-p$), it is well known that the optimal measurement for distinguishing $\rho$ and $\sigma$ achieves success probability $\frac{1}{2}\left(1 + \|p \rho - (1-p) \sigma \|_1 \right)$, where $\|M\|_1 = \tr |M|$ is the usual trace norm~\cite{helstrom76,holevo73a}. Setting $\Delta = p \rho - (1-p) \sigma$, we thus obtain that the optimal bias (i.e.\ the difference between the probability of success and failure) over all measurements is just $\|\Delta\|_1$.

However, what if we are not allowed to perform an arbitrary measurement, but are restricted to performing a single fixed quantum measurement, followed by arbitrary classical postprocessing? Given a measurement $M = (M_i)$ (a partition of the identity into positive operators, i.e.\ a POVM), let $\rho^M$, $\sigma^M$ be the probability distributions on measurement outcomes induced by performing $M$ on $\rho$, $\sigma$. The optimal bias one can achieve by performing $M$ is then equal to
\[ \|\Delta\|_M := \|p \rho^M - (1-p) \sigma^M\|_1 = \sum_i |p \tr M_i \rho - (1-p) \tr M_i \sigma| = \sum_i |\tr M_i \Delta|. \]
Generalising work of Ambainis and Emerson~\cite{ambainis07b}, Matthews, Wehner and Winter studied this quantity for measurements which are 4-designs (defined below), and bipartite measurements $M$ comprised of a product of local 4-designs, and gave essentially tight bounds on $\|\Delta\|_M$~\cite{matthews09a}. Very recently, Lancien and Winter have extended these results to general multipartite 4-design measurements~\cite{lancien12}, achieving the following result.

\begin{thm}[Lancien and Winter~\cite{lancien12}]
\label{thm:intromultipartite}
Let $M$ be a $k$-fold tensor product 4-design and set $\Delta = p\rho - (1-p)\sigma$, for quantum states $\rho$, $\sigma \in B((\C^n)^{\otimes k})$. Then there is a universal constant $C > 1$ such that
\[ \|\Delta\|_M \ge C^{-k} \left(\sum_{S \subseteq [k]} \| \tr_S \Delta \|_2^2 \right)^{1/2}, \]
where $\|\Delta\|_2 = \left(\tr |\Delta|^2\right)^{1/2}$ is the Schatten 2-norm.
\end{thm}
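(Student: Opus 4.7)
The plan is to express $\|\Delta\|_M$ as an $L^1$ norm of a specific polynomial on a product of complex unit spheres, use H\"older to convert this into a ratio of $L^2$ and $L^4$ norms, apply hypercontractivity on the sphere to bound the $L^4$ norm by the $L^2$ norm, and finally compute the $L^2$ norm explicitly in terms of partial traces. Concretely, write the product 4-design as $M_{\vec{\alpha}} = n^k \prod_{i=1}^k w^{(i)}_{\alpha_i}\, |\Psi(\vec{\alpha})\rangle\langle \Psi(\vec{\alpha})|$ with $|\Psi(\vec{\alpha})\rangle = \bigotimes_{i=1}^k |\psi^{(i)}_{\alpha_i}\rangle$, and set $f_\Delta(\vec{\psi}) = \langle \Psi|\Delta|\Psi\rangle$. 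Then
\[
\|\Delta\|_M \;=\; n^k \, \mathbb{E}_{\vec{\alpha}}\, |f_\Delta(\vec{\psi})|,
\]
so the task reduces to lower-bounding an $L^1$ norm of $f_\Delta$ under the product-design distribution.

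I next apply the Paley--Zygmund consequence of H\"older's inequality (using the exponents $3/2,\,3$), namely $\|f_\Delta\|_1 \ge \|f_\Delta\|_2^3 / \|f_\Delta\|_4^2$. Since $\Delta$ is Hermitian, $|f_\Delta|^2$ and $|f_\Delta|^4$ are polynomials of degree $2$ and $4$ respectively in each single-system projector $|\psi^{(i)}\rangle\langle\psi^{(i)}|$. By the 4-design property on each factor, both expectations coincide with the corresponding Haar integrals over the product of spheres, so the $L^2$ and $L^4$ norms may equivalently be taken with respect to the uniform (Haar) measure on each $S^{2n-1}$ --- the natural setting for hypercontractivity.

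To bound the $L^4$ norm I use Beckner's hypercontractive inequality on the real sphere, the spherical analogue of Corollary~\ref{cor:hyper}: any polynomial $g$ of degree $d$ on $S^{2n-1}$ satisfies $\|g\|_q \le (q-1)^{d/2} \|g\|_2$, with a constant independent of $n$. Since $f_\Delta$ has degree $\le 2$ in each of the $k$ sphere variables, tensorizing this inequality across the $k$ factors gives $\|f_\Delta\|_4 \le 3^k \|f_\Delta\|_2$. Combined with the previous step, $\|\Delta\|_M \ge n^k \cdot 9^{-k} \|f_\Delta\|_2$.

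Finally I compute $\|f_\Delta\|_2^2$ in closed form. Using $\int |\psi\rangle\langle\psi|^{\otimes 2}\,d\psi = (I+F)/(n(n+1))$ on each factor, expanding $\bigotimes_i (I+F_i)$ into the $2^k$ terms indexed by $S\subseteq [k]$, and tracing out the identity factors before applying the swap identity $\tr((A\otimes A)F)=\tr A^2$ on the remaining sites, one obtains
\[
\|f_\Delta\|_2^2 \;=\; \frac{1}{(n(n+1))^k} \sum_{T\subseteq [k]} \|\tr_T \Delta\|_2^2.
\]
Since $n^k/(n(n+1))^{k/2} = (n/(n+1))^{k/2} \ge 2^{-k/2}$, combining everything proves the theorem with $C = 9\sqrt{2}$. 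The main difficulty is the hypercontractive step: one must tensorize Beckner's single-sphere inequality across the $k$ factors so that a polynomial which is degree $\le 2$ \emph{per factor} (rather than of fixed total degree) gets exactly the constant $3^k$. This works because the tensor-product noise semigroup on the product of spheres is itself hypercontractive, and expanding $f_\Delta$ in the joint spherical-harmonic basis confines its support to eigenspaces of degree at most $2$ along each factor.
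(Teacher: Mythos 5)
Your proposal is correct and follows essentially the same route as the paper's proof of Theorem \ref{thm:multipartite}: the fourth-moment (H\"older) reduction, replacement of the design sum by Haar integrals, hypercontractivity on the product of real spheres (which the paper obtains via multiplicativity of the Poisson semigroup's operator norm, Corollary \ref{cor:sphconck}) giving the factor $3^k$, and the explicit $L^2$ computation via $(I+F)/(n(n+1))$ on each factor. Your constant $C = 9\sqrt{2}$ matches the paper's $(81(1+1/n))^{k/2}$ prefactor up to the trivial bound $1+1/n \le 2$.
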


Lancien and Winter give two proofs of Theorem \ref{thm:intromultipartite}. While both are arguably elementary (being ultimately based on the use of Cauchy-Schwarz and some combinatorial bounds), the details of each appear somewhat intricate. We give an alternative proof of this result (with a worse constant $C$) based on hypercontractivity on the real $n$-sphere.


\subsubsection{Spectral concentration for $k$-local Hamiltonians}

A Hamiltonian (Hermitian operator) $H$ on the space of $n$ qubits is said to be $k$-local if it can be written as a sum $H = \sum_i H_i$, where each $H_i$ acts non-trivially on at most $k$ qubits. It is well-known that Theorem \ref{thm:bonamibeckner} can be used to prove concentration inequalities for low-degree multivariate polynomials (otherwise known as ``higher order Chernoff bounds'', see e.g.~\cite{dinur07,odonnell08}). We observe that a noncommutative generalisation of Theorem \ref{thm:bonamibeckner} allows an easy proof of the following bound on the spectra of $k$-local operators.

\begin{thm}
\label{thm:introtail}
Let $M$ be a $k$-local Hermitian operator on $n$ qubits with eigenvalues $(\lambda_i)$ such that $\tr M^2 = 2^n$. Then, for any $t \ge (2e)^{k/2}$,
\[ \frac{|\{i: |\lambda_i| \ge t\}|}{2^n} \le \exp(-k t^{2/k} / (2e)). \]
\end{thm}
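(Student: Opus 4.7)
The strategy is to mimic the standard derivation of higher-order Chernoff bounds from Corollary~\ref{cor:hyper}, but with the Boolean-cube setting replaced by the Pauli basis on $n$ qubits. I expand $M$ as $M = \sum_{S \in \{I,X,Y,Z\}^n} \hat{M}(S)\, \sigma_S$, with the Pauli operators $\sigma_S$ taken as an orthonormal basis with respect to the normalised Hilbert-Schmidt inner product $\langle A, B \rangle = 2^{-n} \tr(A^\dagger B)$. Using the normalised Schatten norm $\|M\|_p := (2^{-n} \tr |M|^p)^{1/p}$, the hypothesis $\tr M^2 = 2^n$ becomes $\|M\|_2 = 1$, and $k$-locality translates into the statement that every $\sigma_S$ with $\hat{M}(S) \neq 0$ has weight (number of non-identity tensor factors) at most $k$. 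In this sense $M$ behaves like a ``Pauli polynomial'' of degree $k$.

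Next I invoke the noncommutative version of Corollary~\ref{cor:hyper} due to Ben-Aroya, Regev and de Wolf~\cite{benaroya08}, applied to operators whose Pauli expansion is supported on weight $\le k$: for $q \ge 2$,
\[
\|M\|_q \;\le\; (q-1)^{k/2}\, \|M\|_2 \;=\; (q-1)^{k/2}.
\]
Since $\|M\|_q^q = 2^{-n} \sum_i |\lambda_i|^q$ is just the $q$-th absolute moment of the eigenvalue distribution of $M$ (under the uniform measure on the $2^n$ eigenvalues), Markov's inequality yields
\[
\frac{|\{i : |\lambda_i| \ge t\}|}{2^n} \;\le\; \frac{\|M\|_q^q}{t^q} \;\le\; \Bigl( \frac{(q-1)^{k/2}}{t} \Bigr)^{q}.
\]

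Finally I optimise over $q \ge 2$. Bounding $q-1 \le q$ for simplicity, the exponent in the bound is $\tfrac{kq}{2} \log q - q \log t$, which is minimised at $q = t^{2/k}/e$, giving the value $-kq/2 = -k t^{2/k}/(2e)$, exactly the claimed exponent. The constraint $q \ge 2$ under which the hypercontractive inequality holds translates precisely into the hypothesis $t \ge (2e)^{k/2}$. The main thing to be careful about is that replacing $q-1$ by $q$ in the pre-factor does not degrade the final exponent; this is absorbed into the constant $2e$. There is no real obstacle beyond verifying that the matrix hypercontractivity of \cite{benaroya08} applies to the (real-linear) span of weight-$k$ Paulis in this noncommutative $L^p$ setting, which is exactly the content of their result.
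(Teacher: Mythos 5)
Your argument is exactly the paper's proof of Theorem \ref{thm:tail}: pass to the normalised Schatten norms so that $\|M\|_2=1$, bound the tail of the eigenvalue distribution by Markov's inequality applied to $\|M\|_q^q$, invoke the hypercontractive moment bound $\|M\|_q \le (q-1)^{k/2}\|M\|_2$ for $k$-local operators, and optimise at $q = t^{2/k}/e$ (with $q\ge 2$ giving the hypothesis $t \ge (2e)^{k/2}$); the computation is correct. The one point to fix is the source of the key inequality: the result of Ben-Aroya, Regev and de Wolf~\cite{benaroya08} is a hypercontractive inequality for \emph{matrix-valued functions on the boolean cube}, with the noise acting on the cube variables, and it does not directly give the Schatten-norm statement $\|M\|_q \le (q-1)^{k/2}\|M\|_2$ for operators whose Pauli expansion has weight at most $k$. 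What you need is hypercontractivity of the qubit depolarising semigroup $D_\epsilon^{\otimes n}$ (Theorem \ref{thm:qhyper}, from~\cite{qboolean}, later extended by King~\cite{king12}), whose consequence Corollary \ref{cor:qhyper} is precisely the noncommutative analogue of Corollary \ref{cor:hyper} that your argument uses; with that substitution the proof is complete and coincides with the one in the paper.
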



\subsubsection{General bounds on multiplayer XOR games}

A simple and natural way of exploring the power of quantum correlations is via the framework of XOR games~\cite{clauser69,cleve04}, which have recently been intensively studied (see e.g.~\cite{perezgarcia08,briet09,briet11}). A $k$-player XOR game $G$ is specified as follows. Let $A \in (\{\pm 1\}^n)^k$ be a multidimensional array (tensor). The $j$'th player is given an input $i_j \in \{1,\dots,n\}$ and must reply with an output $x^j_{i_j} \in\{\pm 1\}$. The inputs are picked by a referee according to a known joint probability distribution $\pi$ on $\{1,\dots,n\}^k$. The players win if and only if the product of all their outputs is equal to the corresponding entry of $A$. The maximal bias achievable by deterministic strategies is therefore
\[ \beta(G) := \max_{x^1,\dots,x^k \in \{\pm1\}^n} \left|\sum_{i_1,\dots,i_k=1}^n \pi_{i_1,\dots,i_k} A_{i_1,\dots,i_k} x^1_{i_1} \dots x^k_{i_k} \right|. \]
One can easily see that shared randomness does not help classical players to increase the bias; however, in some cases sharing entanglement can be advantageous~\cite{clauser69}. In the case where $\pi$ is uniform, the problem of calculating or bounding $\beta(G)$ has been studied in several other settings classically, under the title of ``unbalancing lights''~\cite{alon00} or ``Gale-Berlekamp switching games'' (e.g.~\cite{berger97}), and has applications in communication complexity~\cite{chung93,ford05}. Even when $k=2$, $\beta(G)$ is known to be NP-hard to compute~\cite{alon06a,roth08}.

It is an interesting question to determine what the ``hardest'' XOR game is for $k$ classical players, i.e.\ the game for which the bias of the best classical strategy is minimised.  It is known that there exist games for which classical players can achieve a bias of at most $n^{-(k-1)/2}$~\cite{ford05}. On the other hand, an inequality of Bohnenblust and Hille from 1931~\cite{bohnenblust31} implies that any XOR game has maximal bias of at least $2^{-O(k)} n^{-(k-1)/2}$. Based on recent work by Defant, Popa and Schwarting~\cite{defant10}, Pellegrino and Seoane-Sep\'ulveda have significantly improved this result by showing that the exponential dependence on $k$ can be made polynomial.

\begin{thm}[Defant, Popa and Schwarting~\cite{defant10}; Pellegrino and Seoane-Sep\'ulveda~\cite{pellegrino12}]
\label{thm:introxor}
Let $G$ be a $k$-player XOR game with $n$ possible inputs per player. Then there is a universal constant $c>0$ such that $\beta(G) = \Omega(k^{-c} n^{-(k-1)/2})$.
\end{thm}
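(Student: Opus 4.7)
The plan is to reduce the theorem to the Bohnenblust-Hille inequality for $k$-linear forms, whose polynomial-in-$k$ constant is the actual content coming from hypercontractivity (this is the Defant-Popa-Schwarting / Pellegrino-Seoane-Sep{\'u}lveda contribution).

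First, write $B_{\vec i} := \pi_{\vec i} A_{\vec i}$, so that $\sum_{\vec i} |B_{\vec i}| = 1$, and form the $k$-linear tensor $T(x^1,\dots,x^k) := \sum_{\vec i} B_{\vec i}\, x^1_{i_1}\cdots x^k_{i_k}$. Since $T$ is multilinear, its supremum over $[-1,1]^{nk}$ is attained at a vertex, so $\beta(G) = \|T\|_{L^\infty(\{\pm 1\}^{nk})}$. Now invoke the sharp Bohnenblust-Hille inequality
\[
\Bigl(\sum_{\vec i} |B_{\vec i}|^{2k/(k+1)}\Bigr)^{(k+1)/(2k)} \le C_k\,\|T\|_{L^\infty}, \qquad C_k = O(k^c).
\]
Lower-bound the left-hand side by H{\"o}lder: with $N = n^k$ summands and $p = 2k/(k+1)$, the inequality $\|B\|_{\ell^1} \le N^{1-1/p}\|B\|_{\ell^p}$ rearranges to $\|B\|_{\ell^p} \ge \|B\|_{\ell^1}/n^{(k-1)/2} = n^{-(k-1)/2}$, since $1 - 1/p = (k-1)/(2k)$ and $N^{(k-1)/(2k)} = n^{(k-1)/2}$. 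Combining, $\beta(G) \ge C_k^{-1} n^{-(k-1)/2} = \Omega(k^{-c} n^{-(k-1)/2})$.

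The real work is proving Bohnenblust-Hille with a polynomial $C_k$, which is where Corollary \ref{cor:hyper} enters. The key observation is that $T$ is a multilinear polynomial of degree $k$ in $nk$ Rademacher variables on $\{\pm 1\}^{nk}$, so by Corollary \ref{cor:hyper} its moments satisfy $\|T\|_{L^q} \le (q-1)^{k/2}\|T\|_{L^2}$ for any $q \ge 2$, while by orthonormality of characters $\|T\|_{L^2} = \|B\|_{\ell^2}$. One then proceeds by induction on $k$, using a Blei-type mixed-norm inequality to dominate $\|B\|_{\ell^{2k/(k+1)}}$ by a product over coordinates of ``one-variable'' $\ell^1 \to \ell^2$ norms of partially-summed slices of $B$. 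Freezing $k-1$ arguments, each such slice norm is bounded in terms of an $L^q$ norm of a lower-degree form, to which the inductive hypothesis and the hypercontractive moment comparison apply; H{\"o}lder-assembling the factors yields a recursion on $C_k$ that closes as a polynomial.

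The main obstacle is this induction: the exponential constant $2^{O(k)}$ of the original 1931 proof comes precisely from the step where one trades $L^q$ for $L^\infty$, and only the $(q-1)^{k/2}$ hypercontractive gain (for optimally chosen $q = O(1)$) brings the constant down to $k^{O(1)}$. The subtlety is that the optimal $q$ and the Blei interpolation exponents must be chosen in tandem at each level of the induction so as not to accumulate super-polynomial factors; tracking this carefully, as in Pellegrino-Seoane-Sep{\'u}lveda, delivers the stated polynomial $C_k$ and hence the theorem.
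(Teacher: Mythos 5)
Your reduction of the theorem to the Bohnenblust--Hille inequality is exactly the paper's: identify $G$ with the multilinear form $f$ whose coefficients are $\hat{f}_{i_1,\dots,i_k} = \pi_{i_1,\dots,i_k} A_{i_1,\dots,i_k}$ (so $\|\hat{f}\|_1 = 1$ and $\beta(G) = \|f\|_\infty$), apply Theorem \ref{thm:forms} with $p = 2k/(k+1)$, and use H\"older in the form $\|\hat{f}\|_p \ge n^{k(1/p-1)}\|\hat{f}\|_1 = n^{-(k-1)/2}$. That step is complete, correct, and identical to the paper's corollary.

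The gap is in the part you defer. The entire content of the theorem is that $C_k$ in Theorem \ref{thm:forms} can be taken polynomial in $k$, and the paper proves this rather than citing it; your sketch of that proof points at the wrong induction. Dominating $\|\hat{f}\|_{2k/(k+1)}$ by ``one-variable'' slices with $k-1$ arguments frozen is the classical Littlewood/Bohnenblust--Hille peel-off: it runs for $k$ levels, and the hypercontractive loss from Corollary \ref{cor:hyper} at each level is a constant factor bounded away from $1$ (tending to $e$ for the exponents forced by the inductive hypothesis), so this recursion gives $C_k = 2^{O(k)}$ --- precisely the exponential dependence you are trying to remove, and no choice of $q$ fixes it. The mechanism that makes the constant polynomial, both in Pellegrino--Seoane-Sep\'ulveda and in the paper's hypercontractive rendition, is a \emph{balanced} split: regard $\hat{f}$ as an $n^{k/2} \times n^{k/2}$ matrix indexed by the first and last $k/2$ coordinates, apply the Blei/Defant--Popa--Schwarting mixed-norm inequality (Lemma \ref{lem:matineq} with $m=k$), convert the resulting row and column $\ell^2$ norms into $\|f_{i_{k/2+1},\dots,i_k}\|_2$ via Parseval, and then use Corollary \ref{cor:hyper} at the exponent $2k/(k+2) = 2(k/2)/(k/2+1)$, chosen exactly so that the inductive hypothesis at $k/2$ applies to the restricted forms. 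This yields $C_k \le (1+4/(k-2))^{k/4} C_{k/2}$, and since the recursion has only $\log_2 k$ levels, each costing a factor $\le e(1+o(1))$, one gets $C_k = O(k^{\log_2 e})$. Without this $k \mapsto k/2$ structure and the matching choice of exponents, the ``recursion that closes as a polynomial'' you assert does not materialize; as written, your argument proves the theorem only modulo citing the polynomial-constant Bohnenblust--Hille inequality, which is the statement the paper sets out to reprove.
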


One can replace a component of their proof with Corollary \ref{cor:hyper}, which leads to a simple and explicit bound on $\beta(G)$ and illustrates the power of using hypercontractivity as a ``black box''. As we discuss below, Theorem \ref{thm:introxor} implies a very special case of a conjecture of Aaronson and Ambainis~\cite{aaronson11} that every bounded low-degree polynomial on the boolean cube has an influential variable.

In the remainder of this paper we elaborate on and prove each of these results in turn.


\section{The bias of local 4-design measurements}

A rank-one POVM $M = (M_i)$ in $n$ dimensions is called a $t$-design~\cite{ambainis07b} if
\be \label{eq:tdesign} \sum_i p_i P_i^{\otimes t} = \int d\psi \proj{\psi}^{\otimes t}, \ee
where $p_i = \frac{1}{n}\tr M_i$ and $P_i = \frac{1}{\tr M_i} M_i$, and the integral is taken according to Haar measure on the complex unit $n$-sphere, normalised such that $\int d\psi \proj{\psi} = I/n$. Observe that the right-hand side of (\ref{eq:tdesign}) is equal to the projector onto the symmetric subspace of $t$ $n$-dimensional systems, normalised by a factor of $\binom{n+t-1}{t}^{-1}$. A $t$-design is automatically an $s$-design for any $1 \le s < t$~\cite{ambainis07b}. $t$-designs can be viewed as discrete approximations to the continuous POVM which puts uniform weight on each measurement vector $\ket{\psi}$. In particular, approximate 4-designs can be used to give an efficient derandomisation of the operation of measurement in a random basis~\cite{ambainis07b}, a primitive which has been used in quantum algorithms~\cite{sen06}.

We will be interested in proving bounds on the bias of $k$-partite measurement operators which are products of local 4-designs. In other words, each measurement operator $M_{i_1,\dots,i_k}$ is a tensor product $M^1_{i_1} \otimes M^2_{i_2} \otimes \dots \otimes M^k_{i_k}$, where each individual measurement $M^i = (M^i_j)$ is a 4-design. We assume for notational simplicity the unnecessary restriction that the local dimensions are the same.

To gain intuition, we begin by considering the unipartite setting $k=1$. In this case we will prove the following theorem, which reproduces a result of Ambainis and Emerson~\cite{ambainis07b} (see also the proof of Matthews, Wehner and Winter~\cite{matthews09a}), with a worse constant.

\begin{thm}
\label{thm:unipartite}
Let $M$ be a 4-design and set $\Delta = p\rho - (1-p)\sigma$, for quantum states $\rho$, $\sigma \in B(\C^n)$ and $0 \le p \le 1$. Then
\[ \|\Delta\|_M \ge \frac{1}{9(1+1/n)^{1/2}} \left( (1-2p)^2 + \tr \Delta^2\right)^{1/2}. \]
\end{thm}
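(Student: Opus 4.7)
The plan is to translate $\|\Delta\|_M$ into an $L^1$-type quantity of a low-degree polynomial on the sphere, evaluate its $L^2$ norm exactly via the $2$-design property, bound its $L^4$ norm using hypercontractivity on the sphere, and then interpolate with Hölder.

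Concretely, write $P_i = \proj{\psi_i}$ and $p_i = \tr M_i / n$, and define the real-valued function $f(\psi) := \bracket{\psi}{\Delta}{\psi}$ on the complex unit $n$-sphere. Since $\tr M_i \Delta = n p_i f(\psi_i)$ and $\sum_i p_i = 1$ (the $1$-design property, implied by being a $4$-design), we get
\[ \|\Delta\|_M = \sum_i |\tr M_i \Delta| = n \sum_i p_i |f(\psi_i)|, \]
which is $n$ times the $L^1$ norm of $f$ against the discrete probability measure $\{p_i\}$.

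Next, I would compute the corresponding $L^2$ quantity using the $2$-design property. Expanding $f(\psi_i)^2 = \tr[(P_i \otimes P_i)(\Delta \otimes \Delta)]$ and invoking $\sum_i p_i P_i^{\otimes 2} = \int d\psi\, \proj{\psi}^{\otimes 2} = (I + F)/(n(n+1))$, where $F$ is the swap, yields
\[ \sum_i p_i f(\psi_i)^2 = \frac{(\tr \Delta)^2 + \tr \Delta^2}{n(n+1)} = \frac{(1-2p)^2 + \tr \Delta^2}{n(n+1)}, \]
using $\tr \Delta = 2p - 1$. Analogously, the $4$-design property gives $\sum_i p_i f(\psi_i)^4 = \int d\psi\, f(\psi)^4$. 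Now $f$ is a real polynomial of total degree $2$ in the real and imaginary components of $\psi$, viewed as a point on $S^{2n-1}$; hypercontractivity on the sphere (a degree-$d$ polynomial $g$ satisfies $\|g\|_q \le (q-1)^{d/2} \|g\|_2$ for $q \ge 2$) therefore yields $\|f\|_4 \le 3 \|f\|_2$, hence
\[ \sum_i p_i f(\psi_i)^4 \le 81 \left(\frac{(1-2p)^2 + \tr \Delta^2}{n(n+1)}\right)^2. \]

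Finally, applying Hölder's inequality with exponents $(3/2, 3)$ to the factorisation $|f|^2 = |f|^{2/3} \cdot |f|^{4/3}$ under the discrete measure $\{p_i\}$ gives the three-point inequality $\|f\|_1 \ge \|f\|_2^3 / \|f\|_4^2$. Plugging in the $L^2$ identity and the $L^4$ bound produces
\[ \sum_i p_i |f(\psi_i)| \;\ge\; \frac{1}{9} \sqrt{\frac{(1-2p)^2 + \tr \Delta^2}{n(n+1)}}, \]
and multiplying by $n$ yields the stated lower bound $\|\Delta\|_M \ge \frac{1}{9\sqrt{1+1/n}} \sqrt{(1-2p)^2 + \tr \Delta^2}$. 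The main obstacle I anticipate is correctly quoting and applying the spherical hypercontractivity estimate with the constant $(q-1)^{d/2}$ for degree-$2$ polynomials on $S^{2n-1}$; this is the single ingredient that turns the otherwise routine moment calculation and Hölder interpolation into the desired bound with the explicit constant $1/9$.
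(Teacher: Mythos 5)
Your proposal is correct and follows essentially the same route as the paper: rewrite $\|\Delta\|_M$ as $n\sum_i p_i|\tr P_i\Delta|$, apply the fourth-moment (H\"older) bound $\|f\|_1 \ge \|f\|_2^3/\|f\|_4^2$, evaluate the second moment exactly via the 2-design property, and bound the fourth moment by spherical hypercontractivity for the degree-2 polynomial $\tr\Delta\proj{\psi}$ on $S^{2n-1}$, giving the constant $1/9$. The single ingredient you flagged as the potential obstacle is exactly the paper's Corollary on $L^q$ norms of low-degree spherical polynomials (derived from Beckner's hypercontractivity of the Poisson semigroup), applied with $q=4$, $d=2$.
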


In order to prove Theorem \ref{thm:unipartite} we follow the strategy of \cite{ambainis07b,matthews09a}. To start with, we use the definition of $\|\Delta\|_M$ and H\"older's inequality (in a form popularised by Berger as the ``fourth moment method\footnote{This method seems to have been first used by Littlewood~\cite{littlewood30} in 1930.}''~\cite{berger97}) to obtain
\beas
\|\Delta\|_M &=& \sum_i |\tr M_i \Delta| = n \sum_i p_i |\tr P_i \Delta| \ge n \frac{\left(\sum_i p_i (\tr P_i \Delta)^2\right)^{3/2}}{\left(\sum_i p_i (\tr P_i \Delta)^4\right)^{1/2}}\\
&=& n \frac{\left(\tr \left(\sum_i p_i P_i^{\otimes 2}\right) \Delta^{\otimes 2}\right)^{3/2}}{\left(\tr \left(\sum_i p_i P_i^{\otimes 4}\right) \Delta^{\otimes 4}\right)^{1/2}}.
\eeas
As $M$ is a 4-design (and hence automatically a 2-design), we can replace both the numerator and denominator with the corresponding quantities on the right-hand side of eqn.\ (\ref{eq:tdesign}) to obtain
\[ \|\Delta\|_M \ge n \frac{\left(\tr \left(\int d\psi \proj{\psi}^{\otimes 2}\right) \Delta^{\otimes 2}\right)^{3/2}}{\left(\tr \left(\int d\psi \proj{\psi}^{\otimes 4}\right) \Delta^{\otimes 4}\right)^{1/2}} = n \frac{\left(\int (\tr \Delta \proj{\psi})^2 d\psi \right)^{3/2}}{\left(\int (\tr \Delta \proj{\psi})^4 d\psi\right)^{1/2}}. \]
We now observe that the quantity $\int (\tr \Delta \proj{\psi})^2 d\psi$ can easily be calculated explicitly in terms of the Schatten 2-norm. We have
\beas \tr \left(\int d\psi \proj{\psi}^{\otimes 2}\right) \Delta^{\otimes 2} &=& \tr \left(\frac{I + F}{n(n+1)}\right) \Delta^{\otimes 2} = \frac{1}{n(n+1)}\left( (\tr \Delta)^2 + \tr \Delta^2\right)\\
&=& \frac{1}{n(n+1)}\left( (1-2p)^2 + \tr \Delta^2\right),
\eeas
where $F$ is the swap operator that exchanges two $n$-dimensional systems. Thus, if we can upper bound $\int (\tr \Delta \proj{\psi})^4 d\psi$ in terms of $\int (\tr \Delta \proj{\psi})^2 d\psi$, this will give a lower bound on $\|\Delta\|_M$. Hypercontractivity on the sphere will allow us to do precisely this.


\subsection{Hypercontractivity and spherical harmonics}

For any $n$, let $S^n$ be the real $n$-sphere (i.e.\ $\{x \in \R^{n+1}: \sum_i x_i^2=1\}$), and for $f:S^n \rightarrow \R$ define the $L^p(S^n)$ norms as
\[ \|f\|_{L^p(S^n)} := \left(\int |f(\xi)|^p d\xi \right)^{1/p}, \]
where we integrate with respect to the uniform measure on $S^n$, normalised so that $\int d\xi = 1$. Identify each $n$-dimensional quantum state $\ket{\psi}$ (element of the unit sphere in $\C^n$) with a real vector $\xi \in S^{2n-1}$ by taking real and imaginary parts, and consider the function $f(\xi) = \tr \Delta \proj{\psi}$. It is easy to convince oneself that $f:S^{2n-1} \rightarrow \R$ is a homogeneous degree 2 polynomial in the components of $\xi$, which will allow us to apply hypercontractivity to relate $\int (\tr \Delta \proj{\psi})^4 d\psi$ to $\int (\tr \Delta \proj{\psi})^2 d\psi$.

In order to state and use a hypercontractive inequality on the $n$-sphere, we will need some basic ideas from the theory of spherical harmonics (see e.g.\ the book~\cite{stein71} or the online notes~\cite{gallier09} for background). The restriction of every degree $d$ polynomial $f:\R^{n+1} \rightarrow \R$ to the sphere $S^n$ can be written as
\[ f(x) = \sum_{k=0}^d Y_k(x), \]
where $Y_k:S^n \rightarrow \R$ is called a {\em spherical harmonic}, and is the restriction of a degree $k$ polynomial to the sphere, satisfying $\int Y_j(\xi) Y_k(\xi) d\xi = 0$ for $j \neq k$. Hence
\[ \|f\|_{L^2(S^n)}^2 = \int f(\xi)^2 d\xi = \sum_{k=0}^d \|Y_k\|_{L^2(S^n)}^2, \]
a generalisation of Parseval's equality. The {\em Poisson semigroup} (which can be thought of as a ``noise operator'' for the sphere) is defined by
\[ (P_\epsilon f)(x) = \sum_k \epsilon^k Y_k(x). \]
Crucially, it is known that the Poisson semigroup is indeed hypercontractive.
\begin{thm}[Beckner~\cite{beckner92}]
\label{thm:sphyper}
If $1 \le p \le q \le \infty$ and $\epsilon \le \sqrt{\frac{p-1}{q-1}}$, then
\[ \|P_\epsilon f\|_{L^q(S^n)} \le \|f\|_{L^p(S^n)}. \]
\end{thm}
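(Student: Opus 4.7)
The plan is to derive Theorem~\ref{thm:sphyper} from a logarithmic Sobolev inequality on $S^n$ via Gross's equivalence between log-Sobolev inequalities and hypercontractivity of a Markov semigroup. First I would observe that, setting $\epsilon=e^{-t}$, the family $Q_t := P_{e^{-t}}$ is a symmetric Markov semigroup whose generator $L$ acts diagonally on spherical harmonics as $LY_k = -kY_k$, with Dirichlet form $\mathcal{E}(f,f) := -\int f\,Lf\,d\xi = \sum_k k\,\|Y_k\|_{L^2(S^n)}^2$. Gross's theorem then reduces the desired inequality $\|Q_t f\|_q \le \|f\|_p$, valid whenever $e^{2t} \ge (q-1)/(p-1)$, to the log-Sobolev inequality
\[ \int_{S^n} f^2 \log\!\frac{f^2}{\|f\|_{L^2(S^n)}^2}\,d\xi \le 2\,\mathcal{E}(f,f). \]

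The key analytic content is proving this log-Sobolev inequality. My approach would be through the harmonic extension: given $f=\sum_k Y_k$ on $S^n$, set $u(r\xi)=\sum_k r^k Y_k(\xi)$ for $r \in [0,1]$. Then $u$ is harmonic on the open unit ball $B^{n+1}$, $Q_t f$ is the restriction of $u$ to the sphere of radius $e^{-t}$, and the Dirichlet-to-Neumann operator on the ball coincides with $-L$. This identifies $\mathcal{E}(f,f)$, up to a universal constant, with the Euclidean Dirichlet energy $\int_{B^{n+1}}|\nabla u|^2\,dx$, so the problem is transformed into controlling an entropy of the boundary trace by an energy of its harmonic extension.

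The second and hardest step is to prove the resulting inequality. A natural route is to stereographically project the sphere to $\R^n$; conformal invariance of the Dirichlet energy makes this transformation well-behaved, and one is led to a log-Sobolev-type inequality on Euclidean space with a weighted measure, which one would try to attack using the Gaussian log-Sobolev inequality of Gross. The principal obstacle is tracking constants through the conformal change of variables so as to recover precisely the sharp threshold $\epsilon \le \sqrt{(p-1)/(q-1)}$; a cleaner but less self-contained alternative is to invoke directly the sharp log-Sobolev inequality associated with the Poisson kernel on the ball, as developed in the theory of conformally invariant operators, which immediately delivers the required constant.
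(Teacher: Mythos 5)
This theorem is not proved in the paper at all: it is quoted as a known result of Beckner \cite{beckner92} (the sharp hypercontractivity of the Poisson semigroup on $S^n$, going back to Mueller--Weissler and Beckner), and is then used as a black box. So the real question is whether your sketch amounts to an independent proof, and it does not. Your first step is essentially sound: with $\epsilon=e^{-t}$ the operators $P_{e^{-t}}$ form a symmetric Markov semigroup with generator $LY_k=-kY_k$, the identification of $\mathcal{E}(f,f)=\sum_k k\|Y_k\|_{L^2(S^n)}^2$ with the Dirichlet energy of the harmonic extension is in fact exact (not merely ``up to a universal constant'' --- and for a sharp-constant statement you must check this exactly), and Gross's equivalence does reduce the theorem to the log-Sobolev inequality with constant $1$ for this form. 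One caveat even here: the Poisson generator is nonlocal (it is $\sqrt{-\Delta+((n-1)/2)^2}-(n-1)/2$, a subordinated, jump-type generator, not a diffusion), so the implication LSI $\Rightarrow$ hypercontractivity at the stated exponents requires the Stroock--Varopoulos inequality $\mathcal{E}(f,f^{q-1})\ge \frac{4(q-1)}{q^2}\,\mathcal{E}(f^{q/2},f^{q/2})$ for general Dirichlet forms; this is standard but should be cited, since the naive chain-rule argument is unavailable.

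The genuine gap is the ``second and hardest step'': the log-Sobolev inequality $\int f^2\log(f^2/\|f\|_2^2)\,d\xi\le 2\sum_k k\|Y_k\|_2^2$ is not a lemma you can wave at --- by the very Gross equivalence you invoke, it is \emph{equivalent} to the theorem being proved, and it is strictly stronger than the sharp log-Sobolev inequality for the spherical Laplacian (since $k\le k(k+n-1)/n$ with equality only at $k=1$), so it cannot be obtained from standard curvature-based spherical LSIs. Your proposed attack does not close this gap: conformal invariance of $\int|\nabla u|^2$ fails for the ball $B^{n+1}$ when $n+1>2$ unless one passes to the conformal Laplacian with its curvature terms and conformal weights, the entropy term does not transform covariantly, and stereographic projection sends the uniform measure on $S^n$ to the heavy-tailed measure proportional to $(1+|x|^2)^{-n}dx$, to which the Gaussian log-Sobolev inequality does not apply directly; recovering the sharp constant $\sqrt{(p-1)/(q-1)}$ along this route is precisely the hard analytic content, and no known proof obtains it this way. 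Your fallback --- ``invoke directly the sharp log-Sobolev inequality associated with the Poisson kernel on the ball'' --- is circular, as that statement is again equivalent to Theorem \ref{thm:sphyper}. A complete argument would have to reproduce the actual content of Beckner's proof (via sharp estimates tied to the Hardy--Littlewood--Sobolev/Gegenbauer structure) or the Mueller--Weissler analysis of the ultraspherical semigroup, neither of which your sketch supplies.
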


Theorem \ref{thm:sphyper} will allow us to relate different $L^p$ norms of low-degree polynomials on the sphere, as follows.

\begin{cor}
\label{cor:sphyper}
Let $f:\R^{n+1} \rightarrow \R$ be a degree $d$ polynomial. Then, for any $q \ge 2$,
\[ \|f\|_{L^q(S^n)} \le (q-1)^{d/2} \|f\|_{L^2(S^n)}. \]
\end{cor}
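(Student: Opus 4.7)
The plan is to deduce Corollary~\ref{cor:sphyper} directly from Theorem~\ref{thm:sphyper} by the standard trick of inverting the Poisson semigroup on the subspace of harmonics of degree at most $d$, exactly paralleling the derivation of Corollary~\ref{cor:hyper} from Theorem~\ref{thm:bonamibeckner}.

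Concretely, decompose $f$ restricted to $S^n$ into spherical harmonics as $f = \sum_{k=0}^d Y_k$ with $\int Y_j Y_k\, d\xi = 0$ for $j \neq k$, so that $\|f\|_{L^2(S^n)}^2 = \sum_{k=0}^d \|Y_k\|_{L^2(S^n)}^2$ by the generalised Parseval identity. Set $\epsilon = 1/\sqrt{q-1}$, which satisfies $\epsilon \le \sqrt{(p-1)/(q-1)}$ with $p=2$, and define
\[ g := \sum_{k=0}^d \epsilon^{-k} Y_k. \]
Then $P_\epsilon g = \sum_{k=0}^d \epsilon^{-k} \epsilon^k Y_k = f$, so Theorem~\ref{thm:sphyper} (applied with this $p,q,\epsilon$) yields
\[ \|f\|_{L^q(S^n)} = \|P_\epsilon g\|_{L^q(S^n)} \le \|g\|_{L^2(S^n)}. \]

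To finish, bound $\|g\|_{L^2(S^n)}$ in terms of $\|f\|_{L^2(S^n)}$. By orthogonality of the $Y_k$,
\[ \|g\|_{L^2(S^n)}^2 = \sum_{k=0}^d \epsilon^{-2k} \|Y_k\|_{L^2(S^n)}^2 \le \epsilon^{-2d} \sum_{k=0}^d \|Y_k\|_{L^2(S^n)}^2 = (q-1)^d\, \|f\|_{L^2(S^n)}^2, \]
where the inequality uses $\epsilon \le 1$ so that $\epsilon^{-2k}$ is maximised at $k = d$. Combining the two displayed inequalities gives $\|f\|_{L^q(S^n)} \le (q-1)^{d/2}\|f\|_{L^2(S^n)}$, as desired.

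There is no real obstacle here beyond getting the substitution right: the only care needed is to preinvert $P_\epsilon$ on each harmonic (which is well defined because $\epsilon > 0$ and there are only finitely many nonzero components, since $f$ has degree $d$) and to verify that the $\epsilon^{-k}$ factors are dominated by $\epsilon^{-d}$. The role of the degree bound $d$ is exactly to keep this worst-case $k=d$ factor finite, which is why the result is specific to low-degree polynomials.
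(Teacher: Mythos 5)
Your proof is correct and follows essentially the same route as the paper: both expand $f$ into spherical harmonics, pre-apply the factors $(q-1)^{k/2} = \epsilon^{-k}$ so that $P_{1/\sqrt{q-1}}$ recovers $f$, invoke Theorem~\ref{thm:sphyper} with $p=2$, and finish with Parseval plus the bound $(q-1)^k \le (q-1)^d$ for $q \ge 2$. The only difference is cosmetic (you name the auxiliary function $g$ explicitly), so nothing further is needed.
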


\begin{proof}
The proof is exactly the same as that of Corollary \ref{cor:hyper} (see e.g.~\cite{odonnell07}). Explicitly, we write
\beas
\|f\|_{L^q(S^n)}^2 &=& \left\|\sum_{k=0}^d Y_k \right\|_{L^q(S^n)}^2 = \left\|P_{1/\sqrt{q-1}}\left( \sum_{k=0}^d (q-1)^{k/2} Y_k\right) \right\|_{L^q(S^n)}^2\\
&\le& \left\|\sum_{k=0}^d (q-1)^{k/2} Y_k \right\|_{L^2(S^n)}^2 = \sum_{k=0}^d (q-1)^k \|Y_k\|_{L^2(S^n)}^2 \le (q-1)^d \|f\|_{L^2(S^n)}^2.
\eeas
The first equality is just expanding $f$ in terms of spherical harmonics and the second follows from the definition of $P_\epsilon$. The first inequality is Theorem \ref{thm:sphyper}, the third equality follows from Parseval's theorem, and the last inequality is obvious.
\end{proof}

By Corollary \ref{cor:sphyper}, for any $p \ge 2$,
%
\[ \left( \int (\tr \Delta \proj{\psi})^p d\psi \right)^{1/p} \le (p-1) \left( \int (\tr \Delta \proj{\psi})^2 d\psi \right)^{1/2}. \]
Taking $p=4$ this implies Theorem \ref{thm:unipartite}, via
\beas
\|\Delta\|_M &\ge& n \frac{\left(\int (\tr \Delta \proj{\psi})^2 d\psi \right)^{3/2}}{\left(\int (\tr \Delta \proj{\psi})^4 d\psi\right)^{1/2}} \ge \frac{n}{9} \left(\int (\tr \Delta \proj{\psi})^2 d\psi \right)^{1/2}\\
&=& \frac{1}{9(1+1/n)^{1/2}}\left( (1-2p)^2 + \tr \Delta^2\right)^{1/2}.
\eeas

The works~\cite{ambainis07b,matthews09a} considered the special case where $\Delta$ is traceless and gave a better constant of $1/3$ rather than $1/9$; 
the more recent result of Lancien and Winter~\cite{lancien12} extends the inequality to the case where $\Delta$ is not traceless and achieves a constant $1/\sqrt{18} \approx 1/4.243$. 


\subsection{Hypercontractivity on $k$ copies of the sphere}

We will now apply the approach of the previous section to the multipartite setting, where the power of the hypercontractive approach will become apparent. We will prove the following, reproducing the main result of~\cite{lancien12} with a worse constant.

\begin{thm}
\label{thm:multipartite}
Let $M$ be a $k$-fold tensor product 4-design and set $\Delta = p\rho - (1-p)\sigma$, for quantum states $\rho$, $\sigma \in B((\C^n)^{\otimes k})$. Then
\[ \|\Delta\|_M \ge \frac{1}{(81(1+1/n))^{k/2}}  \left(\sum_{S \subseteq [k]} \| \tr_S \Delta \|_2^2 \right)^{1/2}, \]
where $\|\Delta\|_2$ is the Schatten 2-norm.
\end{thm}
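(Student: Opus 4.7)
The plan is to extend the unipartite argument from the preceding section, replacing hypercontractivity on a single sphere with hypercontractivity on the product $(S^{2n-1})^k$, where we identify each $\ket{\psi_j}\in\C^n$ with a real unit vector $\xi_j\in S^{2n-1}$.

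First, apply the fourth moment method exactly as before, but tensor-factor by tensor-factor. Writing each local design element as $M^j_{i_j}=np^j_{i_j}P^j_{i_j}$ and $P_{\vec i}=\bigotimes_j P^j_{i_j}$, H\"older's inequality gives
\[ \|\Delta\|_M \ge n^k \frac{\bigl(\sum_{\vec i} p_{\vec i}\,(\tr P_{\vec i}\Delta)^2\bigr)^{3/2}}{\bigl(\sum_{\vec i} p_{\vec i}\,(\tr P_{\vec i}\Delta)^4\bigr)^{1/2}}. \]
Because each $M^j$ is a $4$-design (and hence also a $2$-design), applying (\ref{eq:tdesign}) on each tensor factor turns both sums into integrals over $(S^{2n-1})^k$, yielding
\[ \|\Delta\|_M \ge n^k \frac{\|f\|_{L^2}^3}{\|f\|_{L^4}^2}, \qquad f(\psi_1,\dots,\psi_k):=\tr\!\Bigl(\Delta\,\bigotimes_{j=1}^k\proj{\psi_j}\Bigr), \]
with the norms taken on $(S^{2n-1})^k$.

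Next, evaluate $\|f\|_{L^2}^2$ in closed form. Using $\int \proj{\psi}^{\otimes 2}d\psi = (I+F)/(n(n+1))$ on each factor and grouping the two copies of each player's subsystem together, one obtains
\[ \|f\|_{L^2}^2 = \frac{1}{(n(n+1))^k}\, \tr\!\Bigl[\Delta^{\otimes 2}\,\bigotimes_{j=1}^k(I_j+F_j)\Bigr]. \]
Expanding the product as $\sum_{S\subseteq[k]}\bigotimes_{j\in S}F_j \otimes \bigotimes_{j\notin S}I_j$ and applying the swap identity $\tr[F(A\otimes B)]=\tr(AB)$ one subsystem at a time shows that the $S$-indexed term equals $\|\tr_{[k]\setminus S}\Delta\|_2^2$; reindexing $S\leftrightarrow [k]\setminus S$ then gives $\|f\|_{L^2}^2 = (n(n+1))^{-k}\sum_{S\subseteq[k]}\|\tr_S\Delta\|_2^2$.

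The key step is the $L^4$ bound. The crucial observation is that $f$ is a polynomial of degree $2$ in each of the $k$ real variables $\xi_j$ separately, since $\proj{\psi_j}$ is quadratic in the real and imaginary components of $\ket{\psi_j}$. I would invoke the standard fact that hypercontractivity of the Poisson semigroup tensorises: $P_\epsilon^{\otimes k}$ is hypercontractive from $L^p$ to $L^q$ on $(S^{2n-1})^k$ whenever $P_\epsilon$ is so on a single sphere, which follows from Theorem~\ref{thm:sphyper} applied one variable at a time together with Minkowski's integral inequality. Decomposing $f$ in the tensor-product basis of spherical harmonics on the $k$ factors and repeating the argument of Corollary~\ref{cor:sphyper} then gives $\|f\|_{L^4}\le 3^k\|f\|_{L^2}$, so $\|f\|_{L^4}^2\le 9^k\|f\|_{L^2}^2$. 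Combining this with the first two steps produces
\[ \|\Delta\|_M \ge \frac{n^k}{9^k}\,\|f\|_{L^2} = \frac{1}{(81(1+1/n))^{k/2}}\Bigl(\sum_{S\subseteq[k]}\|\tr_S\Delta\|_2^2\Bigr)^{1/2}, \]
as required. The main point (rather than serious obstacle) is the tensorisation of hypercontractivity, which is what keeps the constant at only $9^k$; every other ingredient is a direct lift of the unipartite proof to the multi-factor setting.
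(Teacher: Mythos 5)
Your proposal is correct and follows essentially the same route as the paper: the fourth-moment/H\"older step combined with the tensor-product 4-design property, the explicit evaluation of $\|f\|_{L^2((S^{2n-1})^k)}^2$ via $\int \proj{\psi}^{\otimes 2} d\psi = (I+F)/(n(n+1))$, and the bound $\|f\|_{L^4} \le 3^k \|f\|_{L^2}$ from tensorised hypercontractivity of the Poisson semigroup (Corollary \ref{cor:sphconck}). Your justification of tensorisation via Minkowski's integral inequality is just the standard proof of the multiplicativity lemma the paper invokes, so the arguments coincide.
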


To prove Theorem \ref{thm:multipartite}, we begin by mimicking the start of the proof of Theorem \ref{thm:unipartite} and using the fact that $M$ is a tensor product of local 4-designs to obtain
\be \label{eq:multifrac} \|\Delta\|_M \ge
n^k \frac{\left(\int\dots\int d\psi_1 \dots d\psi_k (\tr \Delta (\proj{\psi_1}\otimes \dots \otimes \proj{\psi_k}) )^2 \right)^{3/2}}{\left(\int\dots\int d\psi_1 \dots d\psi_k (\tr \Delta (\proj{\psi_1}\otimes \dots \otimes \proj{\psi_k}))^4 \right)^{1/2}} = n^k \frac{\|f\|_{L^2((S^{2n-1})^k)}^3}{\|f\|_{L^4((S^{2n-1})^k)}^2},
\ee
where we define the function $f:(S^{2n-1})^k \rightarrow \R$ by
\[ f(\xi_1,\dots,\xi_k) = \tr \Delta(\proj{\psi_1} \otimes \dots \otimes \proj{\psi_k}), \]
where $\ket{\psi_i}$ is the $n$-dimensional complex unit vector whose real and imaginary parts are given by $\xi_i \in S^{2n-1}$ in the obvious way. In order to relate the denominator in (\ref{eq:multifrac}) to the numerator (which can easily be calculated, as we will see below), we require an extension of Theorem \ref{thm:sphyper} and Corollary \ref{cor:hyper} to $(S^{2n-1})^k$. A suitable extension of Theorem \ref{thm:sphyper} immediately follows from multiplicativity of the operator norm of the Poisson operator, which can be proven using standard and generic arguments (see e.g.\ the proof in~\cite{beckner75} or the version in~\cite{dinur05}).

\begin{lem}
Fix $1 \le p \le q$ and consider linear operators $M_1:L^p(S_1,\mu_1) \rightarrow L^q(T_1,\nu_1)$ and $M_2:L^p(S_2,\mu_2) \rightarrow L^q(T_2,\nu_2)$. Then $\|M_1 \otimes M_2\| \le \|M_1\|\|M_2\|$.
\end{lem}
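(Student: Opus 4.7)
The plan is to decompose the tensor product as $M_1 \otimes M_2 = (M_1 \otimes \mathrm{id}) \circ (\mathrm{id} \otimes M_2)$ and apply the two operators to $f \in L^p(S_1 \times S_2, \mu_1 \times \mu_2)$ one variable at a time; the main trick is to use Minkowski's integral inequality---whose validity hinges precisely on $p \le q$---to rearrange a mixed $L^p/L^q$ norm so that the operator norm bound for $M_1$ can be applied cleanly after the operator norm bound for $M_2$ has been used.

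Concretely, for $f:S_1\times S_2 \to \R$ I would set $g(t_1, s_2) := (M_1 f(\cdot, s_2))(t_1)$, so that $((M_1 \otimes M_2) f)(t_1, t_2) = (M_2\, g(t_1, \cdot))(t_2)$. Integrating first over $t_2$ with $t_1$ fixed and invoking the definition of $\|M_2\|$ gives, for almost every $t_1$,
\[ \int_{T_2} |((M_1 \otimes M_2) f)(t_1, t_2)|^q\, d\nu_2(t_2) \le \|M_2\|^q \left( \int_{S_2} |g(t_1, s_2)|^p\, d\mu_2(s_2)\right)^{q/p}. \]
Integrating over $t_1$ and taking $q$-th roots then yields
\[ \|(M_1 \otimes M_2) f\|_{L^q(\nu_1 \times \nu_2)} \le \|M_2\| \cdot \bigl\| \|g(t_1, \cdot)\|_{L^p(\mu_2)} \bigr\|_{L^q(\nu_1)}. \]

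Because $q \ge p$, Minkowski's integral inequality then lets me exchange the two norms on the right-hand side:
\[ \bigl\| \|g(t_1, \cdot)\|_{L^p(\mu_2)} \bigr\|_{L^q(\nu_1)} \le \bigl\| \|g(\cdot, s_2)\|_{L^q(\nu_1)} \bigr\|_{L^p(\mu_2)}. \]
For each fixed $s_2$, $g(\cdot, s_2) = M_1 f(\cdot, s_2)$, so the inner norm on the right is at most $\|M_1\|\,\|f(\cdot, s_2)\|_{L^p(\mu_1)}$; taking the outer $L^p(\mu_2)$ norm and using Fubini identifies this with $\|M_1\|\,\|f\|_{L^p(\mu_1\times \mu_2)}$, and chaining the inequalities gives the lemma.

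The only substantive ingredient is the Minkowski swap: this is precisely the step where $p \le q$ is needed, and without that hypothesis the order of the two norms cannot be exchanged and the argument breaks down. Everything else is the definition of the operator norm together with Fubini, which is why the lemma has a standard short proof in the literature and the authors simply cite it.
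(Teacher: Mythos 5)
Your argument is correct and is essentially the standard proof the paper itself does not reproduce but cites (Beckner~\cite{beckner75}, Dinur--Friedgut~\cite{dinur05}): factor the tensor product so as to act one variable at a time, bound each factor by its operator norm, and use the generalised Minkowski (integral) inequality, valid precisely because $p \le q$, to swap the mixed $L^p/L^q$ norms, finishing with Fubini. The only cosmetic points are that you actually execute the composition in the order $(\mathrm{id}\otimes M_2)\circ(M_1\otimes\mathrm{id})$ rather than the order stated in your opening sentence, and that formally one proves the bound on the algebraic tensor product (finite sums of product functions, where $g$ is manifestly well defined) and then extends $M_1\otimes M_2$ by density; neither affects the validity of the argument.
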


We thus immediately have the following corollary of Theorem~\ref{thm:sphyper}.

\begin{cor}
\label{cor:sphypertp}
Let $f:(S^n)^k \rightarrow \R$. If $1 \le p \le q \le \infty$ and $\epsilon \le \sqrt{\frac{p-1}{q-1}}$, then
\[ \|P_\epsilon^{\otimes k} f\|_{L^q((S^n)^k)} \le \|f\|_{L^p((S^n)^k)}. \]
\end{cor}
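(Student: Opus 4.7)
The plan is to deduce this immediately from the preceding lemma on multiplicativity of operator norms under tensor products, combined with Theorem~\ref{thm:sphyper} in its single-sphere form. Under the hypothesis $\epsilon \le \sqrt{(p-1)/(q-1)}$, Theorem~\ref{thm:sphyper} says that the operator $P_\epsilon:L^p(S^n) \to L^q(S^n)$ has operator norm at most $1$, i.e.\ it is a contraction from $L^p$ to $L^q$.

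The next step is to iterate the tensor product lemma. Applying it once gives $\|P_\epsilon \otimes P_\epsilon\| \le \|P_\epsilon\|\cdot\|P_\epsilon\| \le 1$ as an operator from $L^p(S^n \times S^n, \mu \otimes \mu)$ to $L^q(S^n \times S^n, \nu \otimes \nu)$, where $\mu = \nu$ is normalised uniform measure on $S^n$. A straightforward induction on $k$ then yields $\|P_\epsilon^{\otimes k}\| \le \|P_\epsilon\|^k \le 1$ as an operator from $L^p((S^n)^k)$ to $L^q((S^n)^k)$, where the product measure is again the $k$-fold product of the normalised uniform measure on $S^n$. By definition of the operator norm, this is exactly the claimed inequality $\|P_\epsilon^{\otimes k} f\|_{L^q((S^n)^k)} \le \|f\|_{L^p((S^n)^k)}$ for every $f \in L^p((S^n)^k)$.

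There is essentially no obstacle here beyond verifying that the abstract lemma applies: one needs that $P_\epsilon$ truly is a bounded linear operator between the relevant $L^p$ and $L^q$ spaces (which follows from Theorem~\ref{thm:sphyper}) and that the product measure on $(S^n)^k$ coincides with the uniform measure used to define the $L^p((S^n)^k)$ norms (which is immediate, since the uniform measure on a product of spheres is the product of the uniform measures on each factor). The only care needed in the induction is to keep the same exponents $p$ and $q$ at every stage, so that the multiplicativity lemma can be applied with matching source and target spaces; since $p$ and $q$ are held fixed throughout, this is automatic. Thus the corollary is little more than a packaging of Theorem~\ref{thm:sphyper} and the tensorisation lemma, and no additional spherical-harmonic analysis is required.
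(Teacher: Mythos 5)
Your proposal is correct and matches the paper's own (implicit) argument: the corollary is obtained exactly by combining Theorem~\ref{thm:sphyper} with the tensor-product lemma, iterating to get $\|P_\epsilon^{\otimes k}\| \le \|P_\epsilon\|^k \le 1$ from $L^p((S^n)^k)$ to $L^q((S^n)^k)$. No further comment is needed.
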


This in turn allows us to prove the required generalisation of Corollary \ref{cor:hyper}, which is also almost immediate, but for which we include a proof for completeness.

\begin{cor}
\label{cor:sphconck}
Let $f:(\R^{n+1})^k \rightarrow \R$ be a degree $d$ polynomial in the components of each $x^1,\dots,x^k \in \R^{n+1}$. Then, for any $q \ge 2$,
\[ \|f\|_{L^q((S^n)^k)} \le (q-1)^{dk/2} \|f\|_{L^2((S^n)^k)}. \]
\end{cor}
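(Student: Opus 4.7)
The plan is to imitate the proof of Corollary \ref{cor:sphyper} step by step, substituting the tensor-power hypercontractivity from Corollary \ref{cor:sphypertp} for the single-factor version. The first step is to establish a product spherical-harmonic decomposition of $f$: since $f$ has degree at most $d$ separately in each $x^i \in \R^{n+1}$, expanding in one variable at a time (treating the remaining variables as parameters, so that the single-variable expansion applies pointwise) yields
\[ f(x^1,\ldots,x^k) = \sum_{k_1=0}^d \cdots \sum_{k_k=0}^d Y_{k_1,\ldots,k_k}(x^1,\ldots,x^k), \]
where $Y_{k_1,\ldots,k_k}$ is, as a function of $x^i$ with the other arguments held fixed, a spherical harmonic of degree $k_i$ on $S^n$. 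By Fubini together with orthogonality of spherical harmonics of distinct degrees on each factor, these summands are pairwise orthogonal in $L^2((S^n)^k)$, and hence a multi-index Parseval identity $\|f\|_{L^2((S^n)^k)}^2 = \sum \|Y_{k_1,\ldots,k_k}\|_{L^2((S^n)^k)}^2$ holds.

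The second step is to observe that $P_\epsilon^{\otimes k}$ acts on $Y_{k_1,\ldots,k_k}$ by multiplication by $\epsilon^{k_1+\cdots+k_k}$, since each tensor factor $P_\epsilon$ contributes a factor of $\epsilon^{k_i}$ on the $i$th variable. Setting $\epsilon = 1/\sqrt{q-1}$ and $g := \sum (q-1)^{(k_1+\cdots+k_k)/2}\,Y_{k_1,\ldots,k_k}$, one then has $P_\epsilon^{\otimes k} g = f$. Applying Corollary \ref{cor:sphypertp} with $p=2$ followed by Parseval yields
\[ \|f\|_{L^q((S^n)^k)}^2 \;\le\; \|g\|_{L^2((S^n)^k)}^2 \;=\; \sum_{k_1,\ldots,k_k} (q-1)^{k_1+\cdots+k_k}\|Y_{k_1,\ldots,k_k}\|_{L^2((S^n)^k)}^2 \;\le\; (q-1)^{dk}\|f\|_{L^2((S^n)^k)}^2, \]
where the last inequality uses $k_1+\cdots+k_k \le dk$ term by term.

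I do not expect any real obstacles here: the only mildly technical point is the tensor-product spherical-harmonic decomposition and its orthogonality, both of which are routine consequences of Fubini and the one-variable theory already invoked in Corollary \ref{cor:sphyper}. The argument is essentially a cosmetic modification of the proof of that corollary, with single indices replaced by $k$-tuples and the scalar Poisson semigroup replaced by its $k$-fold tensor power.
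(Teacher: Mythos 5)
Your proposal is correct and follows essentially the same route as the paper's own proof: the same multi-degree decomposition of $f$ into components $Y_{k_1,\dots,k_k}$ orthogonal in $L^2((S^n)^k)$, the same application of the tensor-power Poisson hypercontractivity (Corollary \ref{cor:sphypertp}) with $\epsilon = 1/\sqrt{q-1}$ and $p=2$, and the same Parseval plus term-by-term bound $(q-1)^{k_1+\cdots+k_k} \le (q-1)^{dk}$. No substantive differences to note.
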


\begin{proof}
The restriction of $f$ to $(S^n)^k$ can be written as a sum of products of spherical harmonics, i.e.
\[ f(\xi_1,\dots,\xi_k) = \sum_{i_1,\dots,i_k=0}^d Y_{i_1,\dots,i_k}(\xi_1,\dots,\xi_k), \]
where each $Y_{i_1,\dots,i_k}$ is a sum of products of degree $i_1,\dots,i_k$ spherical harmonics on $S^n$, $Y_{i_1,\dots,i_k} = \sum_m Y^m_{i_1} \dots Y^m_{i_k}$, and hence $\int Y_{i_1,\dots,i_k}(\xi_1,\dots,\xi_k) Y_{j_1,\dots,j_k}(\xi_1,\dots,\xi_k)d\xi_1 \dots d\xi_k = 0$ if $i_\ell \neq j_\ell$ for some $\ell$. Then
\beas
\|f\|_{L^q((S^n)^k)}^2 &=& \left\|\sum_{i_1,\dots,i_k=0}^d Y_{i_1,\dots,i_k} \right\|_{L^q((S^n)^k)}^2
= \left\|P_{1/\sqrt{q-1}}^{\otimes k} \left( \sum_{i_1,\dots,i_k=0}^d (q-1)^{\sum_j i_j/2} Y_{i_1,\dots,i_k} \right) \right\|_{L^q((S^n)^k)}^2\\
&\le& \left\|\sum_{i_1,\dots,i_k=0}^d (q-1)^{\sum_j i_j/2} Y_{i_1,\dots,i_k} \right\|_{L^2((S^n)^k)}^2
= \sum_{i_1,\dots,i_k=0}^d (q-1)^{\sum_j i_j} \|Y_{i_1,\dots,i_k}\|_{L^2((S^n)^k)}^2 \\
&\le& (q-1)^{dk} \|f\|_{L^2((S^n)^k)}^2.
\eeas
\end{proof}

We can now finally complete the proof of Theorem \ref{thm:multipartite}. From (\ref{eq:multifrac}) and Corollary \ref{cor:sphconck}, we have
\[ \|\Delta\|_M \ge n^k \frac{\|f\|_{L^2((S^{2n-1})^k)}^3}{\|f\|_{L^4((S^{2n-1})^k)}^2} \ge \left(\frac{n}{9}\right)^k \|f\|_{L^2((S^{2n-1})^k)}. \]
All that remains is to explicitly calculate
\beas
\|f\|_{L^2((S^{2n-1})^k)}^2 &=& \tr \left(\int\dots\int d\psi_1 \dots d\psi_k \proj{\psi_1}^{\otimes 2} \otimes \dots \otimes \proj{\psi_k}^{\otimes 2} \right) \Delta^{\otimes 2} \\
&=& \tr \left(\frac{I + F}{n(n+1)}\right)^{\otimes k} \Delta^{\otimes 2} = \frac{1}{n^k(n+1)^k} \sum_{S \subseteq [k]} \| \tr_S \Delta \|_2^2,
\eeas
which completes the proof of Theorem \ref{thm:multipartite}.

We have therefore reproduced the main result of~\cite{lancien12} (with slightly worse constants). This result is almost optimal, as Lancien and Winter give a simple example ($\Delta = \left(\frac{1}{2}(\proj{0} - \proj{1})\right)^{\otimes k}$) where the bias goes down exponentially with $k$.

It is far from clear that the proof here is ``simpler'' than that of~\cite{lancien12}, which is arguably more intuitive and explicit. One potential advantage of the present approach is that it naturally explains the presence of an exponential prefactor in Theorem \ref{thm:multipartite} in terms of the multiplicativity of the operator norm of the Poisson semigroup. The proof technique here is also somehow more general, as Corollary \ref{cor:sphconck} gives bounds on all $L^q$ norms of $f$, rather than just the $L^4$ norm considered in~\cite{lancien12}, which may be useful in the study of $q$-designs for $q>4$. By contrast, although extending the combinatorial approach of~\cite{ambainis07b,matthews09a,lancien12} to arbitrary $q$ would be possible, it seems likely to require more work. In the special case $k=1$, the approach of~\cite{ambainis07b,matthews09a,lancien12} can indeed be pushed to arbitrary $q$ without too great an effort, and yields better constants than the result given here (Andreas Winter, personal communication).

It is interesting to note that the quantity $\|M\|_{2(k)} := \left(\sum_{S \subseteq [k]} \| \tr_S M \|_2^2\right)^{1/2}$ which appears in Theorem \ref{thm:multipartite} has occurred before in a different setting. If we let $M$ be a quantum state (rather than the difference of two quantum states as considered above) then $\|M\|_{2(k)}^2$ is equal to the probability of $M$ passing a certain natural test for being a product state~\cite{harrow10}, up to a constant depending on $k$ and the local dimensions. The main result of~\cite{harrow10} is an upper bound on $\|M\|_{2(k)}^2$ in terms of the injective tensor norm of $M$, or in other words the maximal overlap of $M$ with a product state. It is an intriguing open question whether the perspective taken here of relating different norms of $M$ could lead to a simpler or stronger proof of the results of~\cite{harrow10}.


\section{Spectral concentration for $k$-local Hamiltonians}

There is a natural extension of the idea of hypercontractivity to a noncommutative setting. Here we will only consider one particular such extension, from functions on the boolean cube $\{0,1\}^n$ to linear operators on the space of $n$ qubits~\cite{qboolean}. Noncommutative hypercontractive inequalities in quantum information have been studied very recently in much greater generality by Kastoryano and Temme, from the perspective of log-Sobolev inequalities~\cite{kastoryano12}.

Let $M \in B((\C^2)^{\otimes n})$ be a Hermitian operator on the space of $n$ qubits. The natural noncommutative analogue of the noise operator defined in eqn.\ (\ref{eq:noiseop}) turns out to be the tensor product of $n$ copies of the qubit depolarising channel $D_\epsilon M = (1-\epsilon) (\tr M) \frac{I}{2} + \epsilon M$. This operator also has a pleasant ``Fourier-side'' description, as follows. Any operator $M \in B((\C^2)^{\otimes n})$ can be expanded in terms of tensor products of Pauli matrices. Identify each string $s \in \{0,1,2,3\}^n$ with the product $\sigma_s := \sigma_{s_1} \otimes \dots \otimes \sigma_{s_n}$ (where $\sigma_0 = \sm{1&0\\0&1}$, $\sigma_1 = \sm{0&1\\1&0}$, $\sigma_2 = \sm{0&-i\\i&0}$, $\sigma_3 = \sm{1&0\\0&-1}$), and write $M = \sum_{s \in \{0,1,2,3\}^n} \widehat{M}(s) \sigma_s$. Then
\[ D_\epsilon^{\otimes n} M = \sum_{s \in \{0,1,2,3\}^n} \epsilon^{|s|} \widehat{M}(s) \sigma_s, \]
where $|s|$ is the number of non-zero components of $s$. Observe that the natural analogue of degree $k$ polynomials on the boolean cube (which have no Fourier coefficients of weight greater than $k$) is operators which have no Pauli coefficients of weight greater than $k$, i.e.\ the class of $k$-local operators.

As one might hope by analogy with the commutative case, $D_\epsilon^{\otimes n}$ does indeed satisfy a hypercontractive inequality~\cite{qboolean}.

\begin{thm}
\label{thm:qhyper}
Let $M$ be a Hermitian operator on $n$ qubits and assume that $1 \le p \le 2 \le q \le \infty$. Then, if $\epsilon \le \sqrt{\frac{p-1}{q-1}}$, $\|D_\epsilon^{\otimes n} M\|_q \le \|M\|_p$.
\end{thm}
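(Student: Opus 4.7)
The plan is to mirror the classical proof of Theorem~\ref{thm:bonamibeckner}: prove the inequality on a single qubit by hand, then tensorise up to $n$ qubits.

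For the single-qubit case ($n = 1$), I would write a Hermitian $M \in B(\mathbb{C}^2)$ as $M = aI + \vec{b} \cdot \vec{\sigma}$ with real $a$ and $\vec{b} \in \mathbb{R}^3$. Since $D_\epsilon$ commutes with conjugation by any $2 \times 2$ unitary (it scales the identity component of $M$ by $1$ and every non-identity Pauli component uniformly by $\epsilon$), and the Schatten norms are unitarily invariant, I may rotate so that $M = aI + b\sigma_3$ is diagonal with eigenvalues $a \pm b$; then $D_\epsilon M$ is diagonal with eigenvalues $a \pm \epsilon b$. Using the normalised Schatten norm $\|M\|_p^p = \tfrac{1}{2}(|a+b|^p + |a-b|^p)$, the inequality $\|D_\epsilon M\|_q \le \|M\|_p$ collapses to exactly the two-point ($n = 1$) instance of Theorem~\ref{thm:bonamibeckner} applied to the function $0 \mapsto a+b$, $1 \mapsto a-b$, which is the classical Bonami two-point inequality.

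The tensorisation step, namely showing $\|D_\epsilon^{\otimes n}\|_{p \to q} \le \|D_\epsilon\|_{p \to q}^n$, is the main obstacle. In the commutative world this kind of multiplicativity is essentially automatic and underlies the tensorised version of Theorem~\ref{thm:sphyper} used in the previous section, but in the noncommutative setting one cannot simply invoke Fubini/Minkowski and the hypothesis $p \le 2 \le q$ becomes essential. The cleanest route, and the one taken in~\cite{qboolean}, is Gross's semigroup method: differentiate the single-qubit hypercontractive inequality at $p = 2$ to obtain an equivalent log-Sobolev inequality for the depolarising semigroup $\{D_{e^{-t}}\}_{t \ge 0}$; observe that log-Sobolev inequalities tensorise trivially because the relative entropy and Dirichlet form are additive across tensor products; and integrate Gross's differential inequality along the $n$-qubit semigroup to recover hypercontractivity for $D_\epsilon^{\otimes n}$ in the full range $1 \le p \le 2 \le q \le \infty$. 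The restriction $p \le 2 \le q$ arises naturally from this integration, since the log-Sobolev inequality yields a hypercontractive bound only along trajectories whose endpoints straddle $2$; outside this range one would need additional ingredients (e.g.\ duality together with complete positivity of $D_\epsilon$), which is precisely why the statement of the theorem restricts to $p \le 2 \le q$.
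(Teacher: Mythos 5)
Your single-qubit step is correct: $D_\epsilon$ commutes with unitary conjugation, Schatten norms are unitarily invariant, and for a diagonal one-qubit operator the normalised Schatten $p$-norm coincides with the normalised two-point $\ell_p$ norm, so the $n=1$ case of Theorem~\ref{thm:qhyper} is literally the two-point instance of Theorem~\ref{thm:bonamibeckner} (and in fact holds for the full range $1 \le p \le q$ there). However, note that this paper does not prove Theorem~\ref{thm:qhyper} at all --- it is imported from~\cite{qboolean} --- and your description of the proof in that source is inaccurate: it does not proceed by Gross's semigroup method, but follows the matrix-valued approach of Ben-Aroya, Regev and de Wolf~\cite{benaroya08}, namely an induction over qubits driven by the Ball--Carlen--Lieb optimal $2$-uniform convexity inequality for Schatten norms, which yields the $(p,2)$ case; the $(2,q)$ case then follows by duality (self-adjointness of $D_\epsilon$ with respect to the normalised trace inner product), and the general case by composing the two maps through $L^2$. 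That routing through $L^2$ is the actual source of the hypothesis $p \le 2 \le q$, which King~\cite{king12} later removed by different methods; the log-Sobolev viewpoint is what the paper attributes separately to Kastoryano and Temme~\cite{kastoryano12}.

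More importantly, even judged on its own terms your tensorisation step has a genuine gap. The claim that quantum log-Sobolev inequalities ``tensorise trivially because the relative entropy and Dirichlet form are additive across tensor products'' is precisely the point that fails to be trivial noncommutatively: additivity of the generator is immediate, but the classical tensorisation argument rests on subadditivity-of-entropy manipulations with partial averages (conditional expectations) that do not carry over verbatim to operators, and the stability of the log-Sobolev constant under tensor powers is flagged as a nontrivial issue in~\cite{kastoryano12} itself. Moreover, to integrate Gross's differential inequality from exponent $p$ up to $q$ in the noncommutative setting one needs $L^r$ log-Sobolev inequalities for all intermediate $r$, and deducing these from the $r=2$ inequality requires regularity properties of the Dirichlet form (in the sense of Olkiewicz and Zegarlinski) that your sketch would have to establish. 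Finally, your explanation of the range is off: classically, Gross's integration gives hypercontractivity for all $1 \le p \le q$ with no need for the endpoints to straddle $2$, so the restriction $p \le 2 \le q$ cannot ``arise naturally from this integration''; it comes from the duality-through-$L^2$ structure of the proof in~\cite{qboolean}. In short: the base case is right, but the heart of the theorem --- tensorisation --- is asserted rather than proved.
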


In this theorem, and the rest of this section, $\|M\|_p$ is the {\em normalised} Schatten $p$-norm, $\|M\|_p = \left(\frac{1}{2^n} \tr |M|^p \right)^{1/p}$.  Following the completion of this work, King has extended Theorem \ref{thm:qhyper} from the depolarising channel to arbitrary qubit channels that belong to self-adjoint semigroups, and has removed the restriction that $p \le 2 \le q$~\cite{king12}. (Noncommutative hypercontractive inequalities for certain subalgebras of $B((\C^2)^{\otimes n})$ had previously been given by Carlen and Lieb~\cite{carlen93} and Biane~\cite{biane97a}; see~\cite{king12} for a discussion.) The following corollary of Theorem \ref{thm:qhyper} was stated in \cite{qboolean}, by exact analogy with the proof of Corollary \ref{cor:hyper}.

\begin{cor}
\label{cor:qhyper}
Let $M$ be a $k$-local Hermitian operator on $n$ qubits. Then, for any $q \ge 2$, $\|M\|_q \le (q-1)^{k/2} \|M\|_2$. Also, for any $p \le 2$, $\|M\|_p \ge (p-1)^{k/2} \|M\|_2$.
\end{cor}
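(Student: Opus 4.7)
The plan is to mimic the proof of Corollary \ref{cor:sphyper} line by line, replacing spherical harmonics by Pauli operators and the Poisson semigroup by the tensor-product depolarising channel $D_\epsilon^{\otimes n}$. First I would expand the $k$-local operator in Paulis as $M = \sum_{s:|s|\le k}\widehat{M}(s)\sigma_s$; $k$-locality is precisely the statement that $\widehat{M}(s)=0$ for $|s|>k$, so this sum is a direct analogue of the degree-$d$ polynomial decomposition used in the commutative case.

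For the first inequality ($q \ge 2$), set
\[ N := \sum_{s:|s|\le k}(q-1)^{|s|/2}\,\widehat{M}(s)\,\sigma_s, \]
so that the Fourier-side description of the depolarising channel gives $D_{1/\sqrt{q-1}}^{\otimes n} N = M$. Applying Theorem \ref{thm:qhyper} with $p=2$ and $\epsilon = 1/\sqrt{q-1}$ yields $\|M\|_q = \|D_{1/\sqrt{q-1}}^{\otimes n} N\|_q \le \|N\|_2$. Since the Paulis form an orthonormal basis under the normalised Hilbert–Schmidt inner product ($\frac{1}{2^n}\tr(\sigma_s\sigma_t) = \delta_{s,t}$), Parseval gives $\|N\|_2^2 = \sum_{s}(q-1)^{|s|}\widehat{M}(s)^2 \le (q-1)^k \|M\|_2^2$, where the bound uses $|s|\le k$ and $\|M\|_2^2 = \sum_s \widehat{M}(s)^2$.

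For the second inequality ($p\le 2$), I would use a duality argument. Let $p' := p/(p-1) \in [2,\infty]$ be the conjugate exponent. Hölder's inequality for the normalised Schatten norms (the factors of $2^n$ cancel exactly because $1/p+1/p'=1$) gives
\[ \|M\|_2^2 = \frac{1}{2^n}\tr M^2 \le \|M\|_p\,\|M\|_{p'}. \]
Since $p' \ge 2$ and $M$ is still $k$-local, the case just proven yields $\|M\|_{p'} \le (p'-1)^{k/2}\|M\|_2 = (p-1)^{-k/2}\|M\|_2$. Rearranging produces the desired $\|M\|_p \ge (p-1)^{k/2}\|M\|_2$.

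There is no real obstacle, as both pieces are by-the-book transcriptions of the arguments already given for Corollaries \ref{cor:hyper} and \ref{cor:sphyper}. The only subtle point is bookkeeping: keeping the normalisation of the Schatten norm consistent with the $\frac{1}{2^n}\tr(\sigma_s\sigma_t)=\delta_{s,t}$ convention for the Pauli basis, and handling the edge case $p=1$ (where $p'=\infty$) by the usual limiting extension of Hölder.
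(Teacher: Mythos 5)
Your proposal is correct and is essentially the argument the paper has in mind: it states Corollary \ref{cor:qhyper} ``by exact analogy with the proof of Corollary \ref{cor:hyper}'', and your $q\ge 2$ half is precisely that noise-operator trick transplanted to the Pauli expansion and $D_\epsilon^{\otimes n}$, with the hypotheses of Theorem \ref{thm:qhyper} satisfied (take $p=2$, $\epsilon=1/\sqrt{q-1}$, and note $N$ is Hermitian since the Pauli coefficients of a Hermitian $M$ are real). Your duality step for $p\le 2$ is a legitimate minor variant of the standard route, which instead applies Theorem \ref{thm:qhyper} directly with $q=2$ and $\epsilon=\sqrt{p-1}$ to get $\|M\|_p \ge \|D_\epsilon^{\otimes n}M\|_2 \ge (p-1)^{k/2}\|M\|_2$; both give the stated bound.
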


We now observe that Corollary \ref{cor:qhyper} allows one to easily prove general tail bounds for eigenvalues of $k$-local Hamiltonians, using exactly the same proof as for classical tail bounds on degree $k$ polynomials~\cite{dinur07,odonnell08}.

\begin{thm}
\label{thm:tail}
Let $M$ be a $k$-local Hermitian operator on $n$ qubits with eigenvalues $(\lambda_i)$ such that $\|M\|_2 = 1$. Then, for any $t \ge (2e)^{k/2}$,
\[ \frac{|\{i: |\lambda_i| \ge t\}|}{2^n} \le \exp(-k t^{2/k} / (2e)). \]
\end{thm}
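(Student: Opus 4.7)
The plan is to run the standard ``moment method'' tail bound argument, but with the Schatten norm playing the role of the usual $L^p$ norm on a probability space, which is natural here because $\|M\|_p^p = \frac{1}{2^n}\sum_i |\lambda_i|^p$ is exactly the $p$-th moment of the spectral distribution of $M$ under the uniform measure that puts mass $2^{-n}$ on each eigenvalue.

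First I would apply Markov's inequality to $|M|^q$: for any $q \ge 2$,
\[
\frac{|\{i : |\lambda_i| \ge t\}|}{2^n} \;=\; \frac{1}{2^n}\sum_i \mathbf{1}[|\lambda_i|\ge t] \;\le\; \frac{1}{2^n}\sum_i \frac{|\lambda_i|^q}{t^q} \;=\; \frac{\|M\|_q^q}{t^q}.
\]
Then I would invoke Corollary \ref{cor:qhyper}, which since $M$ is $k$-local and $\|M\|_2 = 1$ gives $\|M\|_q \le (q-1)^{k/2} \le q^{k/2}$. Substituting yields
\[
\frac{|\{i : |\lambda_i| \ge t\}|}{2^n} \;\le\; \frac{q^{qk/2}}{t^q}.
\]

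The remaining step is to optimize $q$. Setting $q = t^{2/k}/e$ makes $q^{k/2} = t/e^{k/2}$, so $q^{qk/2}/t^q = e^{-qk/2} = \exp(-k t^{2/k}/(2e))$, which is exactly the claimed bound. The hypothesis $t \ge (2e)^{k/2}$ is just what is needed to ensure $q \ge 2$ so that Corollary \ref{cor:qhyper} applies in this form.

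There isn't really a substantive obstacle: the whole proof is a direct transcription of the classical degree-$k$ polynomial tail bound (as in the references to \cite{dinur07,odonnell08} in the excerpt), with the noncommutative hypercontractive inequality Corollary \ref{cor:qhyper} doing exactly the work that Corollary \ref{cor:hyper} does in the commutative case. The only small thing to be careful about is the normalisation convention (the normalised Schatten $p$-norm), which is precisely what makes the Markov step above give a probability/fraction on the left-hand side, and the bookkeeping of the choice of $q$ so that the $t \ge (2e)^{k/2}$ threshold comes out correctly.
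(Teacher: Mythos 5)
Your proposal is correct and is essentially identical to the paper's proof: Markov's inequality applied to the $q$-th power of the spectrum, the bound $\|M\|_q \le (q-1)^{k/2}\|M\|_2 \le q^{k/2}$ from Corollary \ref{cor:qhyper}, and the choice $q = t^{2/k}/e$ (with $t \ge (2e)^{k/2}$ ensuring $q \ge 2$). Nothing further is needed.
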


\begin{proof}
For any $r\ge 2$, we have
\[ \frac{|\{i: |\lambda_i| \ge t\}|}{2^n} = \frac{|\{i: |\lambda_i|^r \ge t^r\}|}{2^n} \le \frac{\sum_i |\lambda_i|^r}{2^n t^r} = \frac{\|M\|_r^r}{t^r} \le \frac{((r-1)^{k/2} \|M\|_2)^r}{t^r} \le (r^{k/2}/t)^r, \]
where the first inequality is Markov's inequality, the second is Corollary \ref{cor:qhyper}, and the third follows from the conditions of the theorem. Now, minimising this expression by taking $r=t^{2/k}/e$, we have
\[ \frac{|\{i: |\lambda_i| \ge t\}|}{2^n} \le \exp(-k t^{2/k} / (2e)) \]
as claimed.
\end{proof}

One can, of course, generalise this bound to systems of $d$-dimensional qudits by considering each qudit as $\lceil \log_2 d\rceil$ qubits, at the expense of increasing $k$ to $k\lceil \log_2 d\rceil$. Similar (and/or stronger) concentration bounds to Theorem \ref{thm:tail} have been proven before by other authors in more restricted settings. For example, Hartmann, Mahler and Hess~\cite{hartmann04} have shown that in a setting where subsystems interact with nearest neighbours in a linear chain, the spectrum in fact converges to a normal distribution. Theorem~\ref{thm:tail} gives a less precise result, but makes no assumptions about the interaction geometry and has a much more concise proof.

Physically, Theorem~\ref{thm:tail} says that the probability that a random state has high energy is exponentially small. One consequence is that, for any $k$-local Hamiltonian $H$ on $n$ qubits, there must exist very large subspaces of states on which the time evolution $e^{-iHt}$ is slow. Indeed, one can use Theorem~\ref{thm:tail} to prove bounds on the probability that a state does not significantly change after evolving according to $H$, as in the following example. Let $H$ be a 2-local Hamiltonian on $n$ qubits with eigendecomposition $H = \sum_k \lambda_k \proj{v_k}$, such that $\|H\|_2=1$. Then, for any initial state $\ket{\psi}$ and any $\mu \ge 0$,
\beas |\bracket{\psi}{e^{-iHt}}{\psi}|
&=& \left|\sum_{k,|\lambda_k| \le \mu} e^{-i\lambda_k t} |\ip{v_k}{\psi}|^2 + \sum_{k,|\lambda_k|> \mu} e^{-i\lambda_k t} |\ip{v_k}{\psi}|^2\right|\\
&\ge& \cos(\mu t)\left(1- \sum_{k,|\lambda_k|> \mu} |\ip{v_k}{\psi}|^2\right) - \sum_{k,|\lambda_k|> \mu} |\ip{v_k}{\psi}|^2\\
&\ge& \cos(\mu t) - 2 \sum_{k,|\lambda_k|> \mu} |\ip{v_k}{\psi}|^2.
\eeas
By Theorem \ref{thm:tail}, for any $\mu \ge 2e$ the dimension of the subspace spanned by $\ket{v_k}$ such that $|\lambda_k|> \mu$ is at most $e^{-\mu/e} 2^n$. Now imagine we pick $\ket{\psi}$ at random (i.e.\ according to Haar measure on the complex unit sphere). This implies, via a standard tail bound for projector overlaps~\cite{bennett05}, that for any $\delta>0$
\[ \Pr_{\ket{\psi}}\left[\sum_{k,|\lambda_k|> \mu} |\ip{v_k}{\psi}|^2 \ge (1+\delta) e^{-\mu/e}\right] \le \exp(-e^{-\mu/e} 2^n(\delta-\ln(1+\delta))/(\ln 2) ). \]
Fixing (for example) $\delta=1$, we obtain for any $\mu \ge 2e$ that
\[ \Pr_{\ket{\psi}}\left[ |\bracket{\psi}{e^{-iHt}}{\psi}| \le \cos(\mu t) - 4e^{-\mu/e} \right] \le \exp(-\Omega(e^{-\mu/e} 2^n)). \]
Other spectral inequalities follow from Corollary \ref{cor:qhyper}. For example, the following quantum generalisation of the classical Schwartz-Zippel lemma was observed in~\cite{qboolean}.

\begin{cor}
Let $H$ be a non-zero $k$-local Hermitian operator on $n$ qubits with rank $r$. Then $r \ge 2^{n - (2\log_2 e)k} \approx 2^{n - 2.89k}$.
\end{cor}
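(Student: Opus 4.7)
The plan is to combine Corollary \ref{cor:qhyper} (which gives $\|H\|_q \le (q-1)^{k/2} \|H\|_2$ for any $q \ge 2$) with a reverse inequality extracting the low-rank hypothesis. Since $H$ has at most $r$ nonzero eigenvalues $\lambda_i$, H\"older's inequality applied to the sequence of eigenvalues with conjugate exponents $q/2$ and $q/(q-2)$ (for any $q > 2$) gives
\[ \sum_i \lambda_i^2 \;=\; \sum_{i:\lambda_i \ne 0} \lambda_i^2 \cdot 1 \;\le\; \Bigl(\sum_i |\lambda_i|^q\Bigr)^{2/q} r^{\,1-2/q}, \]
which, after dividing both sides by appropriate powers of $2^n$, translates into normalized Schatten norms as $\|H\|_2^2 \le (r/2^n)^{1-2/q}\,\|H\|_q^2$.

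Combining this with Corollary \ref{cor:qhyper} and dividing out $\|H\|_2 > 0$ (which is positive because $H \ne 0$) gives
\[ 1 \;\le\; (r/2^n)^{1-2/q}\,(q-1)^k, \]
or equivalently $\ln(2^n/r) \le \frac{k\,q\,\ln(q-1)}{q-2}$ for every $q > 2$. The remaining step is to optimize the right-hand side. A short calculation shows that the map $q \mapsto \frac{q\,\ln(q-1)}{q-2}$ is monotonically increasing on $(2,\infty)$, and by L'H\^opital its limit as $q \to 2^+$ equals $2$. Taking the infimum therefore yields $\ln(2^n/r) \le 2k$, i.e.\ $r \ge 2^n e^{-2k} = 2^{n-(2\log_2 e)k}$, as claimed.

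The argument is essentially mechanical once the right ingredients are paired; the only genuinely delicate point is recognising that the optimal bound is extracted in the limit $q \to 2^+$ rather than at any concrete value of $q$ (taking $q=3$ or $q=4$, for instance, gives the strictly weaker exponents $3k$ or $k\log_2 9 \approx 3.17\,k$). The overall structure is the familiar one of sandwiching $\|H\|_2$ between a forward hypercontractive bound (that pays for being $k$-local) and a reverse H\"older-type bound (that pays for having rank $r$) in such a way that $\|H\|_2$ cancels.
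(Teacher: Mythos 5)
Your proof is correct and is essentially the argument the paper intends: the corollary is presented as a direct consequence of Corollary \ref{cor:qhyper} (with the details in the cited reference \cite{qboolean}), obtained exactly as you do by playing the hypercontractive bound $\|H\|_q \le (q-1)^{k/2}\|H\|_2$ against a rank-$r$ H\"older bound on the support of the spectrum and letting $q \to 2^+$, which is where the constant $2\log_2 e$ comes from. One minor simplification: monotonicity of $q \mapsto q\ln(q-1)/(q-2)$ is not needed, since the inequality holds for every $q>2$ and only the limiting value $2$ as $q\to 2^+$ is used.
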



\section{General bounds on XOR games}

A homogeneous polynomial $f:(\R^n)^k \rightarrow \R$ is said to be a {\em multilinear form} if it is linear in each input, i.e.\ $f(x^1+y,x^2,\dots,x^k) = f(x^1,\dots,x^k) + f(y,x^2,\dots,x^k)$ for all $x^i\in \R^n$ and all $y\in \R^n$, and similarly for the other positions. Any multilinear form can be written as
\[ f(x^1,\dots,x^k) = \sum_{i_1,\dots,i_k} \hat{f}_{i_1,\dots,i_k} x^1_{i_1} x^2_{i_2} \dots x^k_{i_k} \]
for some multidimensional array (tensor) $\hat{f} \in \R^n \times \R^n \times \dots \times \R^n$. Observe that, if we consider $x^j \in \{\pm1\}^n$,  this expression is precisely the Fourier expansion of $f$ as a function on the boolean cube $\{\pm1\}^{nk}$, justifying the use of the notation $\hat{f}$. The (normalised) $\ell_p$ norms of $f$ as a function on the boolean cube are thus given by
\[ \|f\|_p := \left(\frac{1}{2^{nk}} \sum_{x^1,\dots,x^k \in \{\pm1\}^n} |f(x^1,\dots,x^k)|^p \right)^{1/p}, \]
and in particular
\[ \|f\|_\infty := \max_{x^1,\dots,x^k \in \{\pm1\}^n} |f(x^1,\dots,x^k)|. \]
Any XOR game $G = (\pi,A)$ corresponds to a multilinear form $f$ by taking
\[ f(x^1,\dots,x^k) = \sum_{i_1,\dots,i_k} \pi_{i_1,\dots,i_k} A_{i_1,\dots,i_k} x^1_{i_1} x^2_{i_2} \dots x^k_{i_k}, \]
and the bias $\beta(G)$ is precisely $\|f\|_\infty$. The following result, which is known as the Bohnenblust-Hille inequality, will allow us to find a general lower bound on $\beta(G)$.

\begin{thm}[Bohnenblust-Hille inequality~\cite{bohnenblust31,defant10,pellegrino12}]
\label{thm:forms}
For any multilinear form $f:(\R^n)^k \rightarrow \R$, and any $p \ge 2k/(k+1)$,
\[ \|\hat{f}\|_p := \left( \sum_{i_1,\dots,i_k} |\hat{f}_{i_1,\dots,i_k}|^p \right)^{1/p} \le C_k \|f\|_\infty, \]
where $C_k$ may be taken to be $O(k^{\log_2 e}) \approx O(k^{1.45})$.
\end{thm}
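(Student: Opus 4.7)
The plan is to prove the inequality by induction on $k$, following the strategy of Pellegrino and Seoane-Sep\'ulveda but replacing the Khintchine-type step in their argument with an application of Corollary~\ref{cor:hyper}, which serves as a convenient ``black box''. For the base case $k=1$, Parseval gives $\|\hat{f}\|_2 = \|f\|_2 \le \|f\|_\infty$, so $C_1 = 1$ suffices.

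For the inductive step with $k \ge 2$, I would decompose
\[ f(x^1, \dots, x^k) = \sum_{i=1}^n x^1_i\, g_i(x^2, \dots, x^k), \]
where each $g_i$ is a $(k-1)$-multilinear form on $\{\pm1\}^{n(k-1)}$. Blei's inequality (a pure consequence of iterated H\"older, with constant $1$) then bounds the target $\ell_p$ norm of $\hat{f}$ by a geometric mean of slice-wise mixed norms:
\[ \|\hat{f}\|_{2k/(k+1)} \le \prod_{j=1}^k \Big(\sum_{i_j=1}^n \|\widehat{g^{(j)}_{i_j}}\|_2\Big)^{1/k}, \]
where $g^{(j)}_{i_j}$ is the $(k-1)$-multilinear form obtained by freezing the $j$-th index at $i_j$. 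By symmetry it suffices to control $\sum_i \|g_i\|_2 = \sum_i \|\widehat{g_i}\|_2$ in terms of $\|f\|_\infty$.

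The mixed-norm bound has two pieces. First, an elementary $\ell_1$ estimate: since $f(\cdot, x^2, \dots, x^k)$ is linear in $x^1 \in \{\pm1\}^n$, choosing $x^1_i = \mathrm{sign}(g_i(x^2, \dots, x^k))$ gives $\sum_i |g_i(x^2, \dots, x^k)| \le \|f\|_\infty$, and averaging over $x^2, \dots, x^k$ yields $\sum_i \|g_i\|_1 \le \|f\|_\infty$. Second, a hypercontractive $L^1\to L^2$ comparison: each $g_i$ is a multilinear polynomial of degree $k-1$, so combining log-convexity of $L^p$ norms (write $\|g_i\|_2 \le \|g_i\|_1^a \|g_i\|_q^b$ with $a+b=1$ and $a+b/q=1/2$) with Corollary~\ref{cor:hyper} (which gives $\|g_i\|_q \le (q-1)^{(k-1)/2}\|g_i\|_2$) produces an inequality of the form $\|g_i\|_2 \le K_k \|g_i\|_1$ for some constant $K_k$ depending on $k$ and the free parameter $q$. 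Summing over $i$ gives $\sum_i \|g_i\|_2 \le K_k \|f\|_\infty$, and substituting into Blei's inequality yields $\|\hat{f}\|_{2k/(k+1)} \le K_k \|f\|_\infty$.

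The delicate part, and the main obstacle, is producing a $K_k$ (and hence $C_k$) that grows only polynomially in $k$. A naive choice such as $q=4$ gives $K_k = 3^{k-1}$, which is exponential. Recovering the stated $O(k^{\log_2 e})$ bound requires optimizing $q$ judiciously (typically letting $q \to 2$ at an appropriate rate as $k$ grows) and/or feeding the inductive hypothesis back into the hypercontractive step rather than using the trivial $\ell_1$ bound alone, so that the per-step multiplicative loss telescopes to a polynomial. This is exactly where the Pellegrino--Seoane-Sep\'ulveda refinement does its work, with Corollary~\ref{cor:hyper} playing the role of the Khintchine inequality in their argument.
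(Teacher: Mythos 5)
Your outline reproduces the classical (Littlewood-style) Bohnenblust--Hille argument and would prove the inequality only with an exponential constant; it does not reach the stated bound $C_k = O(k^{\log_2 e})$, which is the whole content of the theorem, and neither of the fixes you gesture at can repair this. First, the $L^1$--$L^2$ step is irreparably exponential: optimizing $q$ in the interpolation $\|g_i\|_2 \le \|g_i\|_1^a\|g_i\|_q^b$ combined with Corollary \ref{cor:hyper} gives at best $K_k = \inf_{q>2}(q-1)^{(k-1)q/(2(q-2))}$, which tends to $e^{k-1}$ as $q \to 2^+$, and no argument of any kind can make $K_k$ polynomial, since for $g$ a tensor power of a normalised linear form in many variables one has $\|g\|_2/\|g\|_1 \approx (\pi/2)^{(k-1)/2}$. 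Second, and more structurally, your recursion peels off one variable at a time and therefore has depth $k$: even if ``feeding the inductive hypothesis back'' reduced the loss per step to a fixed constant $c>1$ (which is the best one could hope for), the telescoped constant would still be $c^{k}$, i.e.\ exponential.

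The paper's proof (following Defant--Popa--Schwarting and Pellegrino--Seoane-Sep\'ulveda) differs in exactly the two places needed to get a polynomial constant. The indices are split into two \emph{balanced} blocks of size $k/2$, and Lemma \ref{lem:matineq} is applied with slices over half the coordinates, producing $\ell_{2k/(k+2)}(\ell_2)$ mixed norms rather than your $\ell_1(\ell_2)$ ones. Parseval identifies each slice $\ell_2$ norm with $\|f_{i_{k/2+1},\dots,i_k}\|_2$; Corollary \ref{cor:hyper} compares this with the $L^{2k/(k+2)}$ norm at a cost of only $((k+2)/(k-2))^{k/4} = O(1)$, because the exponent $2k/(k+2)$ approaches $2$ as $k$ grows; and the resulting sum over slices is rewritten as $\E_{x^1,\dots,x^{k/2}}\bigl[\|\hat{f'}_{x^1,\dots,x^{k/2}}\|_{2k/(k+2)}^{2k/(k+2)}\bigr]$, to which the inductive hypothesis for $(k/2)$-linear forms applies directly since $2k/(k+2) = 2(k/2)/(k/2+1)$ --- this replaces your trivial bound $\sum_i\|g_i\|_1 \le \|f\|_\infty$. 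The recursion is then $C_k \le (1+4/(k-2))^{k/4}\,C_{k/2} \approx e\,C_{k/2}$, and because the halving recursion has depth only $\log_2 k$, this telescopes to $C_k = O(e^{\log_2 k}) = O(k^{\log_2 e})$. Without both the balanced split (logarithmic depth) and the inductive hypothesis at exponent $2k/(k+2)$ (bounded per-step loss), the polynomial bound is out of reach.
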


The special case $k=2$ was previously proven by Littlewood~\cite{littlewood30} and is known as Littlewood's $4/3$ inequality. The original proof of Theorem \ref{thm:forms} given by Bohnenblust and Hille had $C_k$ growing exponentially with $k$~\cite{bohnenblust31}. Pellegrino and Seoane-Sep\'ulveda have recently shown that in fact one can take $C_k = \poly(k)$ in this inequality~\cite{pellegrino12} (see also~\cite{diniz12,nunezalarcon12,serranorodriguez12} for further and more precise bounds on $C_k$). The proof in~\cite{pellegrino12} is formally a consequence of prior, and very general, work by Defant, Popa and Schwarting~\cite{defant10}, but makes various careful combinatorial choices to achieve an explicit bound on $C_k$ which is polynomial in $k$; we therefore give credit for Theorem \ref{thm:forms} to both sets of authors. Here we will give a modified proof,  based on hypercontractivity, of this result.

First, we observe that Theorem \ref{thm:forms} has the following corollary for XOR games.

\begin{cor}
Let $G$ be a $k$-player XOR game with $n$ possible inputs per player. Then $\beta(G) = \Omega(k^{-3/2} n^{-(k-1)/2})$.
\end{cor}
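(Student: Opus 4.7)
The plan is to combine Theorem \ref{thm:forms} with a lower bound on $\|\hat f\|_p$ coming from the fact that $\pi$ is a probability distribution. Since $\hat f_{i_1,\dots,i_k} = \pi_{i_1,\dots,i_k} A_{i_1,\dots,i_k}$ and $A_{i_1,\dots,i_k} \in \{\pm 1\}$, we have $|\hat f_{i_1,\dots,i_k}| = \pi_{i_1,\dots,i_k}$, so that
\[ \|\hat f\|_p^p = \sum_{i_1,\dots,i_k} \pi_{i_1,\dots,i_k}^p. \]

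First I would choose the exponent $p = 2k/(k+1)$, which is the smallest allowed by Theorem \ref{thm:forms} and hence the one giving the strongest lower bound on $\|\hat f\|_p$. Since $p > 1$ for all $k \ge 2$, and $\pi$ sums to $1$ over the $n^k$ index tuples, the power-mean inequality (equivalently, Jensen applied to the convex function $t \mapsto t^p$) gives
\[ \sum_{i_1,\dots,i_k} \pi_{i_1,\dots,i_k}^p \ge n^k \left(\frac{1}{n^k}\right)^p = n^{-k(p-1)}, \]
with equality when $\pi$ is uniform. A short calculation with $p-1 = (k-1)/(k+1)$ then gives $k(p-1)/p = (k-1)/2$, so $\|\hat f\|_p \ge n^{-(k-1)/2}$.

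Combining this with Theorem \ref{thm:forms} yields $C_k \beta(G) = C_k \|f\|_\infty \ge \|\hat f\|_p \ge n^{-(k-1)/2}$, hence
\[ \beta(G) \ge C_k^{-1} n^{-(k-1)/2} = \Omega\!\left(k^{-\log_2 e}\, n^{-(k-1)/2}\right), \]
and since $\log_2 e < 3/2$ this is in particular $\Omega(k^{-3/2} n^{-(k-1)/2})$.

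There is no real obstacle beyond invoking the right inequality: the whole content is in Theorem \ref{thm:forms}, and the only thing to check is that $p = 2k/(k+1)$ is simultaneously allowed by Bohnenblust--Hille and has the exponent $k(p-1)/p = (k-1)/2$ that matches the target bound. It is worth noting that any larger $p$ would work for the upper bound on $\|\hat f\|_p$, but would give a strictly weaker lower bound on $\beta(G)$, which is why the extremal exponent $2k/(k+1)$ is the natural choice.
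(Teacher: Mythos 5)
Your proposal is correct and matches the paper's proof: the paper applies Theorem \ref{thm:forms} with $p=2k/(k+1)$ together with the norm inequality $\|\hat{f}\|_p \ge n^{k(1/p-1)}\|\hat{f}\|_1 = n^{-(k-1)/2}$, which is exactly the power-mean/Jensen step you carry out using $|\hat{f}_{i_1,\dots,i_k}| = \pi_{i_1,\dots,i_k}$ and $\sum \pi = 1$. Your final constant $C_k^{-1} = \Omega(k^{-\log_2 e})$ and the observation that $\log_2 e < 3/2$ likewise agree with the paper's statement.
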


\begin{proof}
Apply Theorem \ref{thm:forms} to the multilinear form $f$ corresponding to $G$, taking $p=2k/(k+1)$, and use the inequality $\|\hat{f}\|_p \ge n^{k(1/p-1)}\|\hat{f}\|_1 = n^{k(1/p-1)}$.
\end{proof}

In order to prove Theorem \ref{thm:forms} we will need the following inequality of Defant, Popa and Schwarting~\cite{defant10} (extending an inequality of Blei~\cite{blei01}), which can be proven by the careful application of H\"older's inequality.

\begin{lem}
\label{lem:matineq}
Let $A = (A_{ij})$ be a matrix whose columns are $(\alpha_i)$ and whose rows are $(\beta_j)$. Then, for any $m \ge 1$,
\[ \left(\sum_{i,j} |A_{ij}|^{2m/(m+1)} \right)^{(m+1)/(2m)} \le \left(\sum_i \|\alpha_i\|_2^{2m/(m+2)} \right)^{(m+2)/(4m)} \left(\sum_j \|\beta_j\|_2^{2m/(m+2)} \right)^{(m+2)/(4m)}. \]
\end{lem}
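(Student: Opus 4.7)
The plan is to prove the lemma by two iterated applications of H\"older's inequality in the row and column indices, followed by one use of Minkowski's inequality to convert a residual row $q$-norm into a column $2$-norm. Throughout, set $p = 2m/(m+1)$ and $q = 2m/(m+2)$, so that $1/p = 1/2 + 1/(2m)$ and $1/q = 1/2 + 1/m$; note that $q \le p \le 2$.

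First, for each fixed $i$, I would split $|A_{ij}|^p = |A_{ij}|^{p/2}\cdot |A_{ij}|^{p/2}$ and apply H\"older's inequality in the sum over $j$ with conjugate exponents $u = 2(m+1)/m$ and $u' = 2(m+1)/(m+2)$. These are chosen precisely so that $(p/2)u = 2$, producing a factor $\|\alpha_i\|_2^{m/(m+1)}$, while $(p/2)u' = q$ produces the ``right'' residual exponent, yielding
\[
\sum_j |A_{ij}|^p \le \|\alpha_i\|_2^{m/(m+1)}\left(\sum_j |A_{ij}|^q\right)^{(m+2)/(2(m+1))}.
\]
Summing over $i$, I would apply H\"older's inequality a second time with conjugate exponents $v = 2(m+1)/(m+2)$ and $v' = 2(m+1)/m$: the choice of $v$ raises $\|\alpha_i\|_2^{m/(m+1)}$ exactly to the power $q$, while $v'$ makes the other factor become $(\sum_j |A_{ij}|^q)^{(m+2)/m}$, which in turn equals $\|\alpha_i\|_q^{2}$ via the identity $q(m+2)/m = 2$. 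This produces
\[
\sum_{ij}|A_{ij}|^p \le \left(\sum_i \|\alpha_i\|_2^q\right)^{(m+2)/(2(m+1))}\left(\sum_i \|\alpha_i\|_q^2\right)^{m/(2(m+1))}.
\]

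Finally, I would invoke Minkowski's inequality in the form $\bigl(\sum_i \|\alpha_i\|_q^2\bigr)^{1/2} \le \bigl(\sum_j \|\beta_j\|_2^q\bigr)^{1/q}$, which is valid because $q \le 2$ (the larger exponent $2$ belongs outside the double sum). Substituting this bound and then taking $p$th roots yields the stated inequality once one checks that both exponents on the right collapse to $(m+2)/(4m) = 1/(2q)$. The main (and essentially only) subtlety is the coordinated choice of H\"older conjugates across the two stages, engineered so that the residual term after the second H\"older is exactly the quantity $\sum_i\|\alpha_i\|_q^2$ to which Minkowski can be applied; the rest is bookkeeping.
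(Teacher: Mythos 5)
Your proof is correct, and all the exponent arithmetic checks out: the conjugate pairs $\bigl(2(m+1)/m,\,2(m+1)/(m+2)\bigr)$ in both H\"older applications, the identities $(p/2)u=2$, $(p/2)u'=q$, $q(m+2)/m=2$, and the final collapse of both exponents to $(m+2)/(4m)=1/(2q)$ after taking $p$th roots. Note that the paper itself gives no proof of this lemma --- it only cites Defant--Popa--Schwarting (extending Blei) and remarks that it follows from a ``careful application of H\"older's inequality'' --- so your argument supplies exactly the kind of proof the paper alludes to, via the standard route of two H\"older steps followed by the mixed-norm (Minkowski) interchange $\bigl(\sum_i \|\alpha_i\|_q^2\bigr)^{1/2} \le \bigl(\sum_j \|\beta_j\|_2^q\bigr)^{1/q}$. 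One point worth making explicit: for $m<2$ the exponent $q=2m/(m+2)$ drops below $1$, but your Minkowski step remains valid because it is really Minkowski's inequality with outer exponent $2/q\ge 1$ applied to the nonnegative array $|A_{ij}|^q$, followed by raising both sides to the power $1/q$; your parenthetical ``valid because $q\le 2$'' identifies precisely the right condition.
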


As with the previous two applications of hypercontractivity discussed, the proof of Theorem \ref{thm:forms} that we give here will also use hypercontractivity via a corollary: in this case the perhaps more well-known Corollary \ref{cor:hyper} for functions on the boolean cube. We stress that the proof given here follows the same lines as that of Pellegrino and Seoane-Sep\'ulveda; the use of Corollary \ref{cor:hyper} simply replaces an equivalent step in their proof (and indeed gives worse constants). However, using this corollary seems (to the author) to make the proof somewhat simpler and more transparent, and illustrates how  hypercontractivity can be used as a ``black box''.

\begin{proof}[Proof of Theorem \ref{thm:forms} (Defant, Popa and Schwarting~\cite{defant10}; Pellegrino and Seoane-Sep\'ulveda~\cite{pellegrino12})]
$ $\\ As $\|\hat{f}\|_p$ is nonincreasing with $p$, it suffices to prove the theorem for $p=2k/(k+1)$. The proof will be by induction on $k$, first assuming that $k$ is a power of 2. The base case $k=1$ is trivial, and in this case we have $C_1=1$. So, assuming the theorem holds for $k/2$, we prove it holds for $k$. By Lemma \ref{lem:matineq},
\bea \left( \sum_{i_1,\dots,i_k} |\hat{f}_{i_1,\dots,i_k}|^{2k/(k+1)} \right)^{(k+1)/(2k)} &\le& \left(\sum_{i_1,\dots,i_{k/2}} \|(\hat{f}_{i_1,\dots,i_k})_{i_{k/2+1},\dots,i_k=1}^n\|_2^{2k/(k+2)} \right)^{(k+2)/4k}\nonumber \\ 
&& \label{eq:twoterms} \times \left(\sum_{i_{k/2+1},\dots,i_k} \|(\hat{f}_{i_1,\dots,i_k})_{i_1,\dots,i_{k/2}=1}^n\|_2^{2k/(k+2)} \right)^{(k+2)/4k}
\eea
We estimate the second term (the first follows exactly the same procedure). For each $i_{k/2+1},\dots,i_k \in \{1,\dots,n\}$, define the function $f_{i_{k/2+1},\dots,i_k}:(\R^n)^{k/2} \rightarrow \R$ by
\[ f_{i_{k/2+1},\dots,i_k}(x^1,\dots,x^{k/2}) = \sum_{i_1,\dots,i_{k/2}} \hat{f}_{i_1,\dots,i_k} x^1_{i_1} x^2_{i_2} \dots x^{k/2}_{i_{k/2}}. \]
Also define a ``dual'' function $f'_{x^1,\dots,x^{k/2}}:(\R^n)^{k/2} \rightarrow \R$ by
\[ f'_{x^1,\dots,x^{k/2}}(x^{k/2+1},\dots,x^k) = f(x^1,\dots,x^k), \]
i.e.\ with respect to $\pm1$-valued inputs, $f'_{x^1,\dots,x^{k/2}}$ is just the restriction of $f$ to the subcube produced by fixing $x^1,\dots,x^{k/2}$. It is easy to verify that $f'_{x^1,\dots,x^{k/2}}$ can be written as
\[ f'_{x^1,\dots,x^{k/2}}(x^{k/2+1},\dots,x^k) = \sum_{i_{k/2+1},\dots,i_k=1}^n f_{i_{k/2+1},\dots,i_k}(x^1,\dots,x^{k/2}) x^{k/2+1}_{i_{k/2+1}} \dots x^k_{i_k}; \]
of course $\|f'_{x^1,\dots,x^{k/2}}\|_{\infty} \le \|f\|_{\infty}$. For each tuple $i_{k/2+1},\dots,i_k$ we have by Parseval's equality
\[ \|(\hat{f}_{i_1,\dots,i_k})_{i_1,\dots,i_{k/2}=1}^n\|_2 = \left(\sum_{i_1,\dots,i_{k/2}=1}^n \hat{f}_{i_1,\dots,i_k}^2 \right)^{1/2} = \|f_{i_{k/2+1},\dots,i_k}\|_2. \]
By the hypercontractive estimate of Corollary \ref{cor:hyper},
\[\|f_{i_{k/2+1},\dots,i_k}\|_2^{2k/(k+2)} \le \left(\frac{k+2}{k-2} \right)^{\frac{k^2}{2(k+2)}} \|f_{i_{k/2+1},\dots,i_k}\|_{2k/(k+2)}^{2k/(k+2)}. \]
We now observe that, for any $p \ge 1$,
\beas
\sum_{i_{k/2+1},\dots,i_k} \|f_{i_{k/2+1},\dots,i_k}\|_p^p &=& \E_{x^1,\dots,x^{k/2}}\left[ \sum_{i_{k/2+1},\dots,i_k} |f_{i_{k/2+1},\dots,i_k}(x^1,\dots,x^{k/2})|^p \right]\\
&=& \E_{x^1,\dots,x^{k/2}}\left[ \|\hat{f'}_{x^1,\dots,x^{k/2}}\|^p_p \right].
\eeas
Hence, taking $p = 2k/(k+2) = 2(k/2)/(k/2+1)$, we have by the inductive hypothesis
\beas
\sum_{i_{k/2+1},\dots,i_k} \|(\hat{f}_{i_1,\dots,i_k})_{i_1,\dots,i_{k/2}=1}^n\|_2^{2k/(k+2)} &\le& \E_{x^1,\dots,x^{k/2}}\left[ \|\hat{f'}_{x^1,\dots,x^{k/2}}\|_{2k/(k+2)}^{2k/(k+2)} \right] \\
&\le& \left(\frac{k+2}{k-2} \right)^{\frac{k^2}{2(k+2)}} C_{k/2}^{2k/(k+2)} \|f\|_{\infty}^{2k/(k+2)},
\eeas
so, combining both terms in the inequality (\ref{eq:twoterms}),
\[ \left( \sum_{i_1,\dots,i_k} |\hat{f}_{i_1,\dots,i_k}|^{2k/(k+1)} \right)^{(k+1)/(2k)} \le \left(\frac{k+2}{k-2} \right)^{k/4} C_{k/2} \|f\|_{\infty}. \]
Thus
\[ C_k \le \left(1 + \frac{4}{k-2} \right)^{k/4} C_{k/2}. \]
%
Observing that $\left(1 + 4/(k-2) \right)^{k/4} \le (1+O(1/k))e$, we have $C_k = O(k^{\log_2 e})$ as claimed.

Finally, if $k$ is not a power of 2, we simply increase it to the next smallest power of 2 (redefining $f$ appropriately), which corresponds to at most a constant increase in $C_k$.
\end{proof}

We remark that Theorem \ref{thm:forms} also proves a very special case of a conjecture of Aaronson and Ambainis~\cite{aaronson11} that every bounded low-degree polynomial on the boolean cube has an influential variable. Define the {\em influence} of the $j$'th variable on a function $f:\{\pm1\}^n \rightarrow \R$ as $I_j(f) = \frac{1}{2^{n+2}} \sum_{x \in \{\pm1\}^n} (f(x)-f(x^j))^2$, where $x^j$ is $x$ with the $j$'th variable negated. The influence also has a concise Fourier-side description: $I_j(f) = \sum_{S \ni j} \hat{f}(S)^2$. The conjecture of~\cite{aaronson11} is that for all degree $d$ polynomials $f:\{\pm 1\}^n \rightarrow [-1,1]$, there exists a $j$ such that $I_j(f) \ge \poly(\Var(f)/d)$, where $\Var(f) = \sum_{S \neq \emptyset} \hat{f}(S)^2$ is the $\ell_2$ variance of $f$. If this conjecture were true, it would imply (informally) that all quantum query algorithms could be efficiently simulated by classical query algorithms on most inputs~\cite{aaronson11}.

Using Theorem \ref{thm:forms}, it is easy to prove the Aaronson-Ambainis conjecture in the very particular case where $f$ is a multilinear form such that $\hat{f}_{i_1,\dots,i_k} = \pm \alpha$, for some $\alpha$, as we now show. If $f$ is a multilinear form as above, it depends on $nk$ variables $x^j_\ell$, where $1 \le j \le k$ and $1 \le \ell \le n$. Observe that the influence of variable $(j,\ell)$ on $f$ is
\[ \Inf_{(j,\ell)}(f) = \sum_{i_1,\dots,i_{j-1},i_{j+1},\dots,i_k} \hat{f}_{i_1,\dots,i_{j-1},\ell,i_{j+1},\dots,i_k}^2. \]
We can now state the following corollary of Theorem \ref{thm:forms}.

\begin{cor}
If $f$ is a multilinear form such that $\|f\|_{\infty} \le 1$ and $\hat{f}_{i_1,\dots,i_k} = \pm \alpha$ for some $\alpha$, then $I_{(j,\ell)}(f) = \Omega(\Var(f)^2 / k^3)$ for all $(j,\ell)$.
\end{cor}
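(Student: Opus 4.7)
The plan is to combine the hypothesis $\hat{f}_{i_1,\dots,i_k} = \pm\alpha$ (all Fourier coefficients have the same magnitude) with the bound on $\|\hat{f}\|_{2k/(k+1)}$ coming from Theorem \ref{thm:forms}. Since there are $n^k$ Fourier coefficients all of absolute value $\alpha$, two quantities have clean closed forms: first, $\Var(f) = \sum_{i_1,\dots,i_k}\alpha^2 = n^k\alpha^2$; and second, for any $(j,\ell)$, the formula for influence quoted just before the corollary gives
\[ I_{(j,\ell)}(f) = \sum_{i_1,\dots,i_{j-1},i_{j+1},\dots,i_k} \hat{f}_{i_1,\dots,i_{j-1},\ell,i_{j+1},\dots,i_k}^2 = n^{k-1}\alpha^2 = \Var(f)/n. \]
So proving the claimed lower bound reduces to the upper bound $\Var(f) = O(k^3/n)$, i.e.\ to showing that $\alpha$ itself cannot be too large relative to $n$.

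Next I would apply Theorem \ref{thm:forms} at the endpoint exponent $p = 2k/(k+1)$. Because every $|\hat{f}_{i_1,\dots,i_k}|$ equals $\alpha$, the left-hand side evaluates explicitly:
\[ \|\hat{f}\|_p^p = n^k \alpha^p \le C_k^p \|f\|_\infty^p \le C_k^p. \]
Rearranging gives $\alpha \le C_k \, n^{-k/p} = C_k \, n^{-(k+1)/2}$, and therefore
\[ \Var(f) = n^k\alpha^2 \le \frac{C_k^2}{n}. \]
Using $C_k = O(k^{\log_2 e})$ from Theorem \ref{thm:forms} yields $\Var(f) = O(k^{2\log_2 e}/n) = O(k^{2.89}/n)$.

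Finally, I would combine these two facts:
\[ I_{(j,\ell)}(f) = \frac{\Var(f)}{n} = \frac{\Var(f)^2}{n \cdot \Var(f)} \ge \frac{\Var(f)^2}{n\cdot (C_k^2/n)} = \frac{\Var(f)^2}{C_k^2} = \Omega\!\left( \frac{\Var(f)^2}{k^{2\log_2 e}} \right), \]
which is $\Omega(\Var(f)^2/k^3)$ since $2\log_2 e < 3$. There is no real obstacle here — the only substantive input is Theorem \ref{thm:forms}, which via hypercontractivity supplies the polynomial-in-$k$ constant $C_k$; the ``$\pm\alpha$'' constraint then trivialises both $\Var(f)$ and each $I_{(j,\ell)}(f)$, so all that is being verified is that the Bohnenblust–Hille exponent $2k/(k+1)$ is precisely tight enough to make the arithmetic work out.
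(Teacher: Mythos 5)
Your proposal is correct and is essentially the paper's own argument: both compute $\Var(f)=n^k\alpha^2$ and $I_{(j,\ell)}(f)=n^{k-1}\alpha^2$ explicitly, then apply Theorem \ref{thm:forms} at $p=2k/(k+1)$ to bound $n^{(k+1)/2}\alpha \le C_k = O(k^{\log_2 e})$, which gives $\Var(f)^2/I_{(j,\ell)}(f)=n^{k+1}\alpha^2 = O(k^3)$. Your rearrangement via $\Var(f)\le C_k^2/n$ is just a cosmetic reshuffling of the same calculation.
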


\begin{proof}
We simply calculate $\Var(f) = n^k \alpha^2$, $I_{(j,\ell)}(f) = n^{k-1} \alpha^2$, and $\|\hat{f}\|_{2k/(k+1)} = n^{(k+1)/2} \alpha = O(k^{3/2})$
%
%
by Theorem \ref{thm:forms}. Thus $\Var(f)^2 / I_{(j,\ell)}(f) = n^{k+1} \alpha^2 = O(k^3)$ as required.
\end{proof}

Andris Ambainis has subsequently generalised this result to multilinear forms $f$ such that $\hat{f}_{i_1,\dots,i_k} \in \{-\alpha,0,\alpha\}$, and $\hat{f}_{i_1,\dots,i_k} \neq 0$ at $n^{\Omega(k)}$ positions $(i_1,\dots,i_k)$ (personal communication). Generalising further to arbitrary multilinear forms might be an interesting way of making progress on the Aaronson-Ambainis conjecture.


\section{Outlook}

This work has presented several examples of results in quantum information theory which can be proven using hypercontractive inequalities. It is, of course, debatable as to whether the proofs given here are really simpler or more intuitive than previously known proofs, especially if one is not initially familiar with hypercontractivity. The author's feeling is that the proofs given here seem less technical than the original proofs, at the expense of being less explicit; it is hoped that hypercontractivity will continue to develop as a tool in quantum information theory.


\section*{Acknowledgements}

This work was supported by an EPSRC Postdoctoral Research Fellowship. I would like to thank Marius Junge, David P\'erez-Garc\'ia and Andreas Winter for pointing out~\cite{biane97a}, \cite{serranorodriguez12} and \cite{hartmann04} respectively; Aram Harrow, Daniel Pellegrino, Juan Seoane-Sep\'ulveda and Ronald de Wolf for helpful comments on previous versions; and also Andris Ambainis, Toby Cubitt, Will Matthews and Andreas Winter for helpful discussions. Finally, I would like to thank two referees for their constructive comments.



\end{document}